
\documentclass[journal,twoside,web,hidelinks]{ieeecolor}
\usepackage{generic}
\usepackage{orcidlink}


\usepackage{amsmath,mathtools,amssymb}

\usepackage{algorithm,algpseudocode,algorithmicx}
\usepackage{cite,cleveref}

\usepackage{url}
\usepackage{textcomp}
\usepackage[normalem]{ulem}
\usepackage{graphicx}

\graphicspath{{figures/}}


\newtheorem{theorem}{Theorem}
\newtheorem{cor}{Corollary}

\newtheorem{prop}{Proposition}

\newtheorem{defn}{Definition}

\newcommand{\bs}{\boldsymbol}
\newcommand{\bb}{\mathbb}
\newcommand{\mcal}{\mathcal}

\newcommand{\eye}{\bs{I}}
\newcommand{\zero}{\bs{0}}
\newcommand{\one}{\bs{1}}

\newcommand{\lb}{\left(}
\newcommand{\rb}{\right)}
\newcommand{\ls}{\left[}
\newcommand{\rs}{\right]}
\newcommand{\lc}{\left\{}
\newcommand{\rc}{\right\}}

\newcommand{\lv}{\left\vert}
\newcommand{\rv}{\right\vert}
\newcommand{\lV}{\left\Vert}
\newcommand{\rV}{\right\Vert}
\newcommand{\LRV}[1]{{\left\vert\kern-0.25ex\left\vert\kern-0.25ex\left\vert #1 \right\vert\kern-0.25ex\right\vert\kern-0.25ex\right\vert}}
\newcommand{\matV}[1]{{\left\vert\kern-0.25ex\left\vert\kern-0.25ex\left\vert #1 \right\vert\kern-0.25ex\right\vert\kern-0.25ex\right\vert}}

\newcommand{\T}{\mathsf{T}}
\newcommand{\trace}[1]{\mathsf{Tr}\lc#1\rc}

\newcommand{\rank}[1]{\mathsf{Rank}\lc#1\rc}

\newcommand{\matA}{\bs{A}}
\newcommand{\matB}{\bs{B}}
\newcommand{\matC}{\bs{C}}

\newcommand{\matI}{\bs{I}}

\newcommand{\matK}{\bs{K}}
\newcommand{\matL}{\bs{L}}
\newcommand{\matM}{\bs{M}}

\newcommand{\matO}{\bs{O}}
\newcommand{\matP}{\bs{P}}
\newcommand{\matQ}{\bs{Q}}
\newcommand{\matR}{\bs{R}}
\newcommand{\matS}{\bs{S}}

\newcommand{\matW}{\bs{W}}

\newcommand{\bbE}{\bb{E}}

\newcommand{\bbR}{\bb{R}}

\newcommand{\calA}{\mcal{A}}
\newcommand{\calB}{\mcal{B}}

\newcommand{\calG}{\mcal{G}}

\newcommand{\calI}{\mcal{I}}

\newcommand{\calN}{\mcal{N}}
\newcommand{\calO}{\mcal{O}}

\newcommand{\calS}{\mcal{S}}
\newcommand{\calT}{\mcal{T}}
\newcommand{\calU}{\mcal{U}}
\newcommand{\calV}{\mcal{V}}

\newcommand{\vecd}{\bs{d}}

\newcommand{\vecu}{\bs{u}}
\newcommand{\vecv}{\bs{v}}
\newcommand{\vecw}{\bs{w}}
\newcommand{\vecx}{\bs{x}}
\newcommand{\vecy}{\bs{y}}
\newcommand{\vecz}{\bs{z}}

\newcommand{\vecmu}{\bs{\mu}}

\newcommand{\matSigma}{\bs{\Sigma}}

\newcommand{\eps}{\epsilon}


\newcommand{\norm}[2]{\left\lVert#1\right\rVert_{#2}}

\newcommand{\set}[1]{\left\{#1\right\}}

\newcommand*{\qedb}{\null\nobreak\hfill\ensuremath{\blacksquare}}
\setlength{\textfloatsep}{0.1cm}


\Crefname{figure}{Fig.}{Figs.}
\allowdisplaybreaks

\def\BibTeX{{\rm B\kern-.05em{\sc i\kern-.025em b}\kern-.08em
    T\kern-.1667em\lower.7ex\hbox{E}\kern-.125emX}}
\markboth{\hskip25pc IEEE TRANSACTIONS AND JOURNALS TEMPLATE}
{Krishna Praveen \MakeLowercase{\textit{et al.}}: Discrete-Time Linear Dynamical System Control Using Sparse Inputs With Time-Varying Support}

\begin{document}
\title{Discrete-Time Linear Dynamical System Control Using Sparse Inputs With Time-Varying Support}

\author{Krishna Praveen V. S. Kondapi$^{1\orcidlink{0009-0003-4936-1288}}$, Chandrasekhar Sriram$^{2\orcidlink{0000-0002-5503-4334}}$, Geethu Joseph$^{3\orcidlink{0000-0002-5289-5403}}$ and Chandra R. Murthy$^{1\orcidlink{0000-0003-4901-9434}}$ 
\thanks{$^{1}$K. Kondapi and C. R. Murthy are with the Dept. of ECE at the Indian Institute of Science (IISc), Bangalore 560012, India. Emails:
        {\tt\small \{praveenkvsk, cmurthy\}@iisc.ac.in}.}%
        \thanks{$^{2}$C. Sriram is with Texas Instruments India Pvt. Ltd. He was at the Dept. of ECE, IISc, during the course of this work. Email: {\tt\small chandrasekhars@alum.iisc.ac.in}}
\thanks{$^{3}$G. Joseph is with the Faculty of Electrical Engineering, Mathematics, and Computer Science at the Delft University of Technology, Delft 2628 CD,  Netherlands. Email:
        {\tt\small G.Joseph@tudelft.nl}.}
\thanks{A part of this work was presented in \cite{Kondapi_icc_2024}.}
}

\maketitle

\begin{abstract}
In networked control systems, communication resource constraints often necessitate the use of \emph{sparse} control input vectors. A prototypical problem is how to ensure controllability of a linear dynamical system when only a limited number of actuators (inputs) can be active at each time step. In this work, we first present an algorithm for determining the \emph{sparse actuator schedule}, i.e., the sequence of supports of the input vectors that ensures controllability. Next, we extend the algorithm to minimize the average control energy by simultaneously minimizing the trace of the controllability Gramian, under the sparsity constraints. We derive theoretical guarantees for both algorithms: the first algorithm ensures controllability with a minimal number of control inputs at a given sparsity level; for the second algorithm, we derive an upper bound on the average control energy under the resulting actuator schedule. Finally, we develop a novel sparse controller based on Kalman filtering and sparse signal recovery that drives the system to a desired state in the presence of process and measurement noise. We also derive an upper bound on the steady-state MSE attained by the algorithm. We corroborate our theoretical results using numerical simulations and illustrate that sparse control achieves a control performance comparable to the fully actuated systems. 
\end{abstract}
\begin{IEEEkeywords}
    Greedy algorithm, Kalman filter, orthogonal matching pursuit, piecewise sparsity, sparse actuator scheduling, sparse controllability, time-varying schedule.
\end{IEEEkeywords}


\section{Introduction}
\label{sec:intro}
The field of networked control systems, in which the controllers, sensors, and actuators communicate over a band-limited network to achieve a given control objective, is a growing area of research \cite{control_comm_constraint, NAGAHARA201184, jadbabaie2019Minreach, Ge_2020_DynamicEvtControl,  Tzoumas2022RASeqOpt, Tzoumas_2021_LQGControlSensing, Lu_2023_ControlCommSch}. Due to data-rate constraints imposed by the network, these systems demand bandwidth-efficient control inputs. One way to reduce the bandwidth utilization is to use only a few of the available actuators at each time instant, i.e.,  the number of nonzero entries of the control input is small compared to its length, leading to a \emph{sparse control input}~\cite{Li_2016_CSinSwitchedSys, NAGAHARA201184, Tzoumas2022RASeqOpt}. Sparse vectors admit compact representation in a suitable basis~\cite{foucart2013math}, and thus, save bandwidth. Sparse control also finds applications in resource-constrained systems such as environmental control systems and network opinion dynamics \cite{joseph2021controllability, WENDT201937}. Motivated by this, we consider the problem of designing sparse control inputs to drive the system to any desired state in a discrete-time linear dynamical system (LDS). 

The problem of finding an actuator schedule, i.e., the sequence of supports (indices of nonzero entries) of sparse inputs to ensure controllability of the discrete-time LDS, has attracted significant research interest in the last decade~\cite{gjoseph2020controllability, Kondapi_icc_2024, Ballotta_2024_PtwiseSch, Zhao_2016_NodeSchControl, Olshevesky_2014_MinControl, jadbabaie2019Minreach,  jadbabaie2018deterministic, SIAMI_2020_Separation}. Some previous studies considered the support of the inputs to be fixed across time~\cite{Olshevesky_2014_MinControl, jadbabaie2019Minreach, Fotiadis2024DataDrivenAlloc}. It was shown that finding the minimum number of actuators needed to control the system is NP-Hard~\cite{Olshevesky_2014_MinControl, jadbabaie2019Minreach}. 
Time-varying supports in the control inputs (a time-varying schedule) can potentially ensure the controllability of systems that might be uncontrollable when using inputs with fixed support (time-invariant schedule). 
Scenarios where a time-varying schedule is better than a time-invariant schedule were discussed in~\cite{Nozari_2017_Tinv}.
One variant of this problem only limits (or minimizes) the \emph{average} number of actuators utilized across all time steps~\cite{Nagahara_2016_MaxHandsOff, jadbabaie2018deterministic, SIAMI_2020_Separation}. In~\cite{jadbabaie2018deterministic, SIAMI_2020_Separation} algorithms to obtain an actuator schedule such that the resulting Gramian matrix approximates the Gramian matrix of the fully actuated system are developed. 
Further,  another line of work in~\cite{nishida2024sparseLQR, Nishida2024BalancingQuadratic} constrained the control inputs to be nonzero at only a few time instants. Since the nonzero inputs need not be sparse, transmitting these control inputs in a networked control system requires a large peak data rate even though the average rate is low. 
In contrast, this paper considers time-varying schedules where the cardinality of the control input support (hence the required network bandwidth) remains the same across time~\cite{gjoseph2020controllability, Kondapi_icc_2024, Ballotta_2024_PtwiseSch, Zhao_2016_NodeSchControl}.

The notion of controllability of a discrete-time LDS using sparse inputs (called sparse controllability) was defined, and the conditions for sparse controllability were derived in \cite{gjoseph2020controllability}. This work proved the \emph{existence} of an actuator schedule that ensures controllability.
The authors in~\cite{Zhao_2016_NodeSchControl} developed an algorithm that outputs a time-varying sparse actuator schedule that minimizes the worst-case control energy with a symmetric transfer matrix. However, no explicit controllability guarantees were derived. Also, the designed algorithm has combinatorial complexity. In~\cite{Kondapi_icc_2024}, the authors developed a greedy algorithm to minimize the inverse of the controllability Gramian. However, this algorithm is not always guaranteed to ensure controllability~\cite{Ballotta_2024_PtwiseSch}. 
This limitation is overcome to some extent in~\cite{Ballotta_2024_PtwiseSch}, where the authors present a scheduling algorithm that selects a few necessary actuators to ensure controllability. 
In the sequel, we provide an example where their algorithm also fails to ensure controllability. We develop another low-complexity scheduling algorithm with a theoretical controllability guarantee under mild conditions (See \Cref{thm:FullB_Ctrl_Guarantee}).

Recently, actuator selection with linear-quadratic regulator and linear-quadratic Gaussian objectives was considered in~\cite{Chamon_2019_MatroidOpt, Chamon2022ApproxSuperMatroid, Jiao_2023_ActAchCvxRlx, Lintao_2025_OnlineActSct}. In~\cite{Chamon_2019_MatroidOpt, Chamon2022ApproxSuperMatroid, Jiao_2023_ActAchCvxRlx}, a dynamic programming approach was used to formulate an actuator scheduling problem, which was solved using greedy and convex relaxation approaches. These studies considered a time-varying schedule, except for~\cite{Lintao_2025_OnlineActSct}. In~\cite{Lintao_2025_OnlineActSct}, an actuator selection problem was formulated as a multiarmed bandit problem when the dynamics of the LDS are unknown. 
All of these papers assumed the availability of perfect state information at the controller at each time instant, which may not be feasible in practice. Therefore, we also consider an LDS with additive noise in the system state evolution as well as measurement equations. In~\cite{Vafaee2023LargeSensorNwk}, the authors considered a sensor scheduling problem in an LDS with process and measurement noise. They proposed a greedy algorithm that minimizes the trace of the covariance matrix of the Kalman filter at each time instant. In contrast, our work develops an algorithm to minimize the tracking error between the system state and target state using inputs obtained from a sparse recovery algorithm at each time instant.


In short, we first consider a noiseless LDS and formulate the actuator scheduling problem as minimizing the trace of the inverse of the controllability Gramian, a widely used metric for quantifying average control energy~\cite {jadbabaie2018deterministic}. Then, we consider an LDS with measurement and process noise and formulate an optimization problem to minimize the tracking error at each time instant. We solve these problems subject to a constraint limiting the active actuators to at most $s$ per time step. The specific contributions of this paper are as follows:
\begin{itemize}
    \item We present an algorithm that returns an actuator schedule that is guaranteed to ensure controllability when the input matrix has full row rank. Our algorithm outputs a schedule that allows one to reach any desired state:
    a) using the minimum value of $s$, and b) using the minimum number of control inputs (See \Cref{sec:FullRankB_CtrlGuarantee}.)
	\item We study controllability under energy constraints by finding an actuator schedule that minimizes a cost function related to average control energy. We show that our cost function exhibits $\alpha-$supermodularity, and the sparsity constraint is a matroid. This leads to a greedy algorithm with provable guarantees for selecting the sparse actuator schedule. (See \Cref{sec:greedy_sch}.)
    \item We develop a Kalman filter based algorithm to design sparse inputs at each time instant to control the LDS with process and measurement noise to reach a target state $\vecx_f$. Additionally, we derive an upper bound on its steady-state mean squared error (MSE). (See \Cref{sec:Nsy_Sys}.)
\end{itemize}
Overall, we systematically investigate the controllability of an LDS using sparse inputs with time-varying support. 
To the best of our knowledge, ours is the first actuator scheduling algorithm that ensures controllability with fixed cardinality support. Also, for a noisy system, we present an algorithm that generates $s$-sparse control inputs using output measurements.

\textbf{Notation:} 
Boldface capital letters, boldface small letters, and calligraphic letters denote matrices, vectors, and sets. 
The $\ell_0$-norm, which counts the number of nonzero elements, is denoted by $\Vert \cdot \Vert_0$,  and the set of indices of the nonzero entries is called the support. Also, $\Vert \cdot \Vert$ denotes the induced $\ell_2$-norm for matrices and Euclidean norm for vectors. The identity matrix is denoted by $\matI$. We use $\matA_{\calS}$ to denote the submatrix of $\matA$ formed by the columns indexed by the set $\calS$. Similarly, $\matA_i$ denotes the $i$th column of the matrix $\matA$. The column space (range) and null space (kernel) of a matrix $\matA$ are denoted as $\mcal{CS}(\matA)$ and $\mcal{NS}(\matA)$, respectively. The symbol $\lceil \cdot \rceil$ denotes the ceiling function. We use $2^{\calV}$ to denote the power set of $\calV$. The operator $\mathbb{E}$ represents expectation and $\mathcal{N}(\vecmu, \matC)$ denotes the Gaussian distribution with mean $\vecmu$ and covariance matrix $\matC$. Also, $\one$ denotes the all-one vector, and $\zero$ denotes the all-zero vector or matrix. Finally, $\matA \succ \zero$ and $\matA \succeq \zero$ represent that $\matA$ is positive definite and positive semi-definite, respectively.

\section{Actuator Scheduling Problem}
\label{sec:sys_model}
We consider a discrete-time LDS $(\matA,\matB)$  whose dynamics are governed by 
\begin{equation}
\vecx (k+1) = \matA \vecx (k) + \matB \vecu(k), \label{eq:state}
\end{equation}
where $k = 0,1,\ldots$ is the integer time index, $\vecx (k) \in \mathbb{R}^n$ and $\vecu(k) \in \mathbb{R}^m$ are the system state and the input vectors, respectively, at time $k$. The matrices $\matA \in \mathbb{R}^{n \times n}$ and $\matB \in \mathbb{R}^{n \times m}$ are the system transfer matrix and the input matrix, respectively. In~\cite{Olshevesky_2014_MinControl, Nozari_2017_Tinv, Bof_2017_NwkCentrality}, the input matrix is taken as $\matB = \matI$. In the sequel, we assume $\rank{\matB}=n$; no other conditions on $\matB$ are assumed. We consider the input $\vecu(k)$ to have at most $s$ nonzero entries ($s$-sparse inputs), i.e., $\Vert\vecu(k)\Vert_0\leq s$, for $k = \{0, 1, \ldots, K-1\}$. 

A classical control design problem is to estimate the control inputs $\{\vecu(k)\}_{k=0}^{K-1}$ that drive the system state to a desired state $\vecx_f \in \bbR^{n}$. 
If the support of $\vecu(k)$ is known, we can easily find the sparse inputs using conventional methods. However, sparsity constraints make finding such an actuator schedule (support of the sparse inputs) challenging. Further, it is known that a sparse actuator schedule that ensures controllability exists if and only if the system is \emph{$s$-sparse controllable}~\cite{gjoseph2020controllability}. Yet, \cite{gjoseph2020controllability} does not provide a method to determine the actuator schedule for $s$-sparse controllable systems. In this work, we address this gap by designing a sparse actuator schedule that can be used to drive the system to any given desired state in $K$ time steps ($\vecx(K)=\vecx_f$). 

The $s$-sparse controllability is equivalent to the following two conditions~\cite{gjoseph2020controllability}: the system is controllable, and $s\geq n-\rank{\matA}$.
So, in the sequel, to ensure that the problem is feasible, we assume that these two conditions hold. 
Further, for an $s$-sparse controllable LDS, the minimum number of time steps $K$ required to ensure controllability is bounded as
\begin{equation} \label{eq:sparse_cntrl_bounds}
\frac{n}{s} \leq K \leq \min\lb q \left\lceil \frac{\rank{\matB}}{s} \right\rceil, n-s+1 \rb.
\end{equation}
where $q$ is the degree of the minimal polynomial of $\matA$. 
Hence, with $\matR$ denoting the controllability matrix, there exists a finite $K$ and an actuator schedule $\calS \triangleq  (\calS_0,\calS_1,\ldots,\calS_{K-1})$ with $\vert \calS_k \vert \leq s$ such that, 
\begin{equation} \label{eq:control_con}
    \matR_\calS \triangleq \begin{bmatrix} 
    \matA^{K-1} \matB_{\calS_0} & \matA^{K-2} \matB_{\calS_1} & \ldots & \matB_{\calS_{K-1}},
    \end{bmatrix}
\end{equation}
satisfies $\rank{\matR_\calS}=n$. Once $\calS$ is determined, the system can be driven from any initial state to a desired final state by applying inputs $\vecu(k)$ with support $\calS_k$, for $k = 0, 1, \ldots, K-1$. Thus, we aim to find $\calS \in \Phi$ such that $\rank{\matR_\calS}=n$, where 
\begin{equation}\label{eq:feasible_set}
    \Phi \triangleq \lc(\calS_0,\ldots,\calS_{K-1}):\calS_k\subseteq \{1,\ldots,m\}, \vert \calS_k \vert \leq s \, \forall k\rc.
\end{equation}
We observe that any actuator schedule of length $K$ can be represented using a subset of  
\begin{equation}\label{eq:V_defn}
    \calV = \{ (k,j) \vert k=0,1,\ldots,K-1, \;j=1,2,\ldots,m \}.
\end{equation}
Here, $k$ and $j$ represent the time and actuator index, respectively. It is clear that there is a natural bijection between any $\calT\in 2^\calV$ and the corresponding actuator schedule:
\begin{equation*}
    \calS(\calT) = (\calS_0, \calS_1, \ldots,\calS_{K-1})\;\;\text{with}\;\; \calS_k=\{j: (k,j)\in\calT\}.
\end{equation*}
Also, including the tuple $(k,j)$ in the schedule $\mathcal{S}$ corresponds to including the $j$th actuator in the $k$th time step, i.e., the column $\matA^{K-k-1}\matB_{j}$ is included in $\matR_\calS$.
Clearly, brute-force search for $\calS \in \Phi$ is infeasible since $|\Phi| = \left(\sum_{l=0}^s \binom{m}{l}\right)^{K}$ can be prohibitively large.  In the next section, we present an algorithm that returns an actuator schedule that guarantees controllability subject to \eqref{eq:feasible_set}.

\section{Actuator Scheduling Algorithm to Guarantee Controllability} \label{sec:FullRankB_CtrlGuarantee}
We consider an actuator scheduling algorithm that starts with an empty set $\calS$ and incrementally adds entries to it until $\rank{\matR_\calS}=n$. Since $\rank{\matB} = n$, we have $\mcal{CS}({\matA^i\matB}) \subseteq \mcal{CS}({\matA^j\matB})$ for $i \geq j \geq 0$. This subspace structure allows for the addition of at most $s$ columns from $\matA^{K-1}\matB$ first, then adding at most $s$ columns from $\matA^{K-2}\matB$, and so on, increasing the rank till $\rank{\matR_\calS}=n$. The main result of this section is that the schedule returned by such a procedure ensures $s$-sparse controllability. 
First, for each $0 \leq i \leq K-1$, consider a set $\calT^{(i)}$ containing column indices (expressed as tuples of the form defined above) of $\begin{bmatrix}
    \matA^{K-1}\matB & \matA^{K-2}\matB & \ldots & \matA^{i}\matB
\end{bmatrix}$ such that the columns indexed by $\calT^{(i+1)}$ are linearly independent (LI) and $\calT^{(K)} = \emptyset$. Let $\calI(i)$ denote a set containing $l(i)$ indices corresponding to columns of $\matA^{i}\matB$ such that $\calT^{(i+1)}\cup\calI(i)$ contain indices of LI columns in $\begin{bmatrix}
    \matA^{K-1}\matB & \matA^{K-2}\matB & \ldots & \matA^{i}\matB 
\end{bmatrix}$. We note that $\calI(i)$ contains $l(i)$ distinct tuples: $\calI(i) = \lc (K-i-1,j_l) \in \calV \rc_{l=1}^{l(i)}$, $j_l \in \set{1,2,\ldots,m}$ and satisfies 
\begin{equation} \label{eq:IndependentSet}
    \mcal{NS}\lb\matR_{\calS \lb \calI(i) \cup \calT^{(i+1)}\rb}\rb = \emptyset. 
\end{equation} 
The following proposition specifies the value of $l(i)$:
\begin{prop} \label{prop:MaxLIColumns}
    If $\rank{\matB}=n$, then there exist $\rank{\matA^i\matB}-\lv \calT^{(i+1)} \rv \geq 0$ linearly independent columns in $\matA^{i}\matB$ that do not belong to $\mcal{CS}(\matR_{\calS(\calT^{(i+1)})})$.
\end{prop}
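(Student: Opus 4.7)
The plan is to exploit the nested structure of the column spaces $\mcal{CS}(\matA^j\matB)$ that is induced by $\rank{\matB}=n$, and then extract the required linearly independent columns by a short quotient-space dimension count. First, since $\matB$ has full row rank, $\matB y$ sweeps out all of $\bbR^n$ as $y$ ranges over $\bbR^m$, so $\mcal{CS}(\matA^j\matB)=\mcal{CS}(\matA^j)$ for every $j\geq 0$; combined with the trivial inclusion $\mcal{CS}(\matA^{j+1})\subseteq \mcal{CS}(\matA^j)$ (write $\matA^{j+1}v=\matA^j(\matA v)$), this gives the key nesting $\mcal{CS}(\matA^j\matB)\subseteq \mcal{CS}(\matA^i\matB)$ whenever $j\geq i$. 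Setting $W\triangleq \mcal{CS}(\matA^i\matB)$ and $V\triangleq \mcal{CS}(\matR_{\calS(\calT^{(i+1)})})$, the definition of $\calT^{(i+1)}$ picks columns only from $\matA^{i+1}\matB,\ldots,\matA^{K-1}\matB$, each of whose column spaces already sits inside $W$, so $V\subseteq W$; moreover $\dim V = \lv\calT^{(i+1)}\rv$ (by the linear independence built into $\calT^{(i+1)}$) and $\dim W = \rank{\matA^i\matB}$.

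Next, I would set $p \triangleq \rank{\matA^i\matB}-\lv\calT^{(i+1)}\rv$; non-negativity $p\geq 0$ is automatic because $\lv\calT^{(i+1)}\rv\leq \rank{\matA^{i+1}\matB}\leq \rank{\matA^i\matB}$, using the same nesting. Then I would pass to the quotient space $W/V$, which has dimension exactly $p$. Since the $m$ columns of $\matA^i\matB$ span $W$, their cosets span $W/V$, so one can select $p$ indices $j_1,\ldots,j_p\in\{1,\ldots,m\}$ whose cosets form a basis of $W/V$. Standard quotient arguments then deliver the conclusion: the columns $\matA^i\matB_{j_1},\ldots,\matA^i\matB_{j_p}$ are linearly independent in $\bbR^n$ (their cosets already are) and none of them lies in $V$ (the cosets are nonzero basis vectors). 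The same construction simultaneously yields the stronger property needed later in \eqref{eq:IndependentSet}, namely that $\calI(i)\cup\calT^{(i+1)}$ indexes a linearly independent collection of columns of $\matR_\calS$, since linear independence modulo $V$ is exactly what rules out cancellation against the columns already spanning $V$.

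The only delicate step is the opening identity $\mcal{CS}(\matA^j\matB)=\mcal{CS}(\matA^j)$: without the hypothesis $\rank{\matB}=n$ it would fail, and one could no longer guarantee $\rank{\matA^i\matB}-\lv\calT^{(i+1)}\rv$ columns of $\matA^i\matB$ whose cosets exhaust $W/V$; the asserted count $p$ would then be only an upper bound. Everything else is routine dimension counting, so this is where the full-row-rank assumption on $\matB$ is truly being spent.
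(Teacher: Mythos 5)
Your proposal is correct and follows essentially the same route as the paper: the paper's own (very brief) argument rests on exactly the two facts you establish, namely $\mcal{CS}(\matR_{\calS(\calT^{(i+1)})}) \subseteq \mcal{CS}(\matA^i\matB)$ (via the nesting of column spaces forced by $\rank{\matB}=n$) and $\rank{\matR_{\calS(\calT^{(i+1)})}} = \lv \calT^{(i+1)} \rv$, followed by a dimension count. Your quotient-space construction of the $p$ columns is simply an explicit version of the basis-extension step the paper leaves implicit, and it correctly also delivers the joint linear independence needed for \eqref{eq:IndependentSet}.
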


We get the above result by noting that $\mcal{CS}\{\matR_{\calS(\calT^{(i+1)})}\} \subseteq \mcal{CS}\{\matA^i\matB\}$ and $\rank{\matR_{\calS(\calT^{(i+1)})}} = \lv \calT^{{i+1}} \rv$.

The above proposition ensures that we can set $l(i) = \min \{ s, \rank{\matA^i\matB}-\lv \calT^{(i+1)} \rv\}$ at each $i$ and ensure that \eqref{eq:IndependentSet} is satisfied. Then, $\calI(i)$ contains $l(i)$ indices of LI columns from the matrix $\matA^{i}\matB$ satisfying \eqref{eq:IndependentSet}, so that adding the set $\calI(i)$ to $\calT^{(i+1)}$ is guaranteed to increase the rank by $l(i)$. One way to compute the set $\calI(i)$ is by using the row echelon form of $\begin{bmatrix}
    \matR_{\calS(\calT^{(i+1)})} \! & \! \matA^i\matB
\end{bmatrix}$. Hence, finding $\calI(i)$ is computationally inexpensive. We present the overall procedure in \Cref{alg:LI_Schedule_FullB}, which returns an actuator schedule that guarantees controllability (see \Cref{thm:FullB_Ctrl_Guarantee}.) It is easy to show that the computational complexity of \Cref{alg:LI_Schedule_FullB} is $\calO(Kmn^2)$.

\begin{algorithm}[t]
    \caption{Controllable Time-Varying Actuator Schedule}
    \begin{algorithmic}[1]
        \Require System matrices $\matA, \matB$; time steps $K$; sparsity level~$s$
        \Statex \textbf{Initialization: } $\calT^{(K)} = \emptyset$
        \For{$i=K-1,K-2,\ldots,0$}
            \State $l(i) \leftarrow \min \lc s, \rank{\matA^i\matB} - \lv \calT^{(i+1)} \rv \rc$ 
            \State Find a set $\calI(i)$ using \eqref{eq:IndependentSet}
            \State $\calT^{(i)} \leftarrow \calI(i) \cup \calT^{(i+1)}$
        \EndFor
        \Ensure Schedule $\calG_0 = \calS(\calT^{(0)})$ 
    \end{algorithmic} \label{alg:LI_Schedule_FullB}
\end{algorithm} 

We note that \Cref{alg:LI_Schedule_FullB} adds the maximum possible number of LI columns from $\matA^i\matB$ while satisfying the sparsity constraint at each iteration. The value of $l(i)$ will become zero once $n$ linearly independent columns have been added to $\matR_{\calS(\calT^{(i+1)})}$. This scenario can occur well before the iteration index $i$ reaches zero, depending on the sparsity level $s$.

Finally, we show that \Cref{alg:LI_Schedule_FullB} returns a full rank schedule, ensuring controllability under the sparsity constraint~\eqref{eq:feasible_set}.
\begin{theorem} \label{thm:FullB_Ctrl_Guarantee}
    If $\rank{\matB}=n$, then for any $s\geq \max\lc 1,n-\rank{\matA} \rc$ and $K\geq \lceil \frac{n}{s} \rceil$,  \Cref{alg:LI_Schedule_FullB} finds an $s$-sparse actuator schedule $\calG_0$ such that $\rank{\matR_{\calG_0}}=n$. 
\end{theorem}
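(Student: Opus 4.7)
The plan is to prove $\rank{\matR_{\calG_0}} = n$ by showing that the terminal set $\calT^{(0)}$ constructed by \Cref{alg:LI_Schedule_FullB} has cardinality exactly $n$. Because \eqref{eq:IndependentSet} is enforced at every iteration (its feasibility being ensured by \Cref{prop:MaxLIColumns} whenever $l(i) > 0$), the cardinality $|\calT^{(i)}|$ coincides with $\rank{\matR_{\calS(\calT^{(i)})}}$ at every step, so $|\calT^{(0)}| = n$ is equivalent to the claim. The work therefore reduces to showing that a budget of $s$ new columns per iteration, exercised over $K \geq \lceil n/s \rceil$ iterations, suffices to accumulate $n$ linearly independent columns from the blocks $\matA^{K-1}\matB, \matA^{K-2}\matB, \ldots, \matB$.

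The key ingredient is a bound on the rank sequence $r_i \triangleq \rank{\matA^i \matB}$. Because $\rank{\matB} = n$, we have $\mcal{CS}(\matB) = \bbR^n$ and thus $r_i = \rank{\matA^i}$, with in particular $r_0 = n$. Applying Sylvester's rank inequality to the product $\matA^i \cdot \matA$ yields $r_{i+1} \geq r_i + \rank{\matA} - n$, i.e., $r_i - r_{i+1} \leq n - \rank{\matA} \leq s$, where the last step uses the hypothesis $s \geq n - \rank{\matA}$. Hence each consecutive drop in the sequence $r_0, r_1, \ldots$ is at most $s$.

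With this in hand, I would prove by downward induction on $i = K, K-1, \ldots, 0$ the bound $|\calT^{(i)}| \geq \min\{r_i, (K-i)s\}$. The base case $i = K$ is trivial since $\calT^{(K)} = \emptyset$ and $(K-K)s = 0$. For the inductive step, the algorithm gives $|\calT^{(i)}| = \min\{|\calT^{(i+1)}| + s, r_i\}$. If this minimum equals $r_i$, the required lower bound is immediate; in the opposite regime, substituting the inductive hypothesis yields $|\calT^{(i)}| \geq \min\{r_{i+1}+s, (K-i)s\}$, and the rank-drop bound $r_{i+1} + s \geq r_i$ promotes this to $\min\{r_i, (K-i)s\}$. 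Evaluating at $i = 0$ gives $|\calT^{(0)}| \geq \min\{n, Ks\} = n$, since $Ks \geq s\lceil n/s \rceil \geq n$; the matching upper bound $|\calT^{(0)}| \leq n$ is automatic.

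The main obstacle is the control of the rank drops $r_i - r_{i+1}$: without the hypothesis $s \geq n - \rank{\matA}$, the rank of $\matA^i \matB$ could drop by more than $s$ between consecutive powers, leaving the algorithm short of the deficit in subsequent iterations and breaking the induction. This is precisely where the sparsity assumption is consumed, while the length assumption $K \geq \lceil n/s \rceil$ is consumed at the final step to ensure that $Ks$ is large enough to reach $n$ at the maximum rate of $s$ columns per iteration.
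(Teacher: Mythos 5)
Your proposal is correct and follows essentially the same route as the paper's proof: both rest on the recursion $\lv\calT^{(i)}\rv = \min\lc \lv\calT^{(i+1)}\rv + s, \rank{\matA^i\matB}\rc$ (with feasibility from \Cref{prop:MaxLIColumns}) and on Sylvester's rank inequality plus $s \geq n-\rank{\matA}$ to bound each rank drop by $s$, so that $Ks \geq n$ yields $\lv\calT^{(0)}\rv = n$. Your packaging as a downward induction with the invariant $\lv\calT^{(i)}\rv \geq \min\lc \rank{\matA^i\matB}, (K-i)s\rc$ is just a cleaner form of the paper's unrolled-minimum argument, which instead shows $\lv\calT^{(1)}\rv \geq n-s$ and fills in the last $s$ columns from $\matB$.
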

\begin{proof}
    See \Cref{app:FullB_Ctrl_Guarantee}.
\end{proof}

\Cref{alg:LI_Schedule_FullB} guarantees controllability with minimum sparsity level, $s=n-\rank{\matA}$, and also with minimum number of control inputs $(K=\lceil \frac{n}{s} \rceil)$ for a given $s$. Hence, it is optimal in both the sparsity level $(s)$ and the number of control inputs $(K)$ in the sense of providing a controllability guarantee.

\Cref{alg:LI_Schedule_FullB} returns a sparse actuator schedule corresponding to exactly $n$ columns of the controllability matrix  whereas the sparsity constraint \eqref{eq:feasible_set} allows the actuator schedule to contain $Ks\geq n$ pairs. Thus, there is room to choose additional columns in the controllability matrix while satisfying \eqref{eq:feasible_set} to potentially optimize the control energy or any other control objective. Specifically, in this work, we consider minimizing $\trace{\matW_{\calS}^{-1}}$, which represents the average energy to drive the LDS from $\vecx(0)=\zero$ to a uniformly random point on the unit sphere, and is a popular measure of the ``difficulty of controllability''~\cite{Olshevsky_2018_NonSupermodular}. Here, $\matW_\calS$ is the controllability Gramian, 
\begin{equation}\label{eq:W_mat_Defn}
	\matW_{\calS} = \matR_\calS\matR_\calS^{\mathsf{T}} = \sum_{k=1}^{K} \matA^{k-1} \matB_{\calS_{K-k}} \matB_{\calS_{K-k}}^{\T} (\matA^{k-1})^{\T}.
\end{equation}
We now present a greedy algorithm to minimize~$\trace{ \matW_{\calS}^{-1}}$.

\section{Greedy Scheduling Algorithm to Minimize Control Energy} \label{sec:greedy_sch}
First, we note that the metric $\trace{ \matW_{\calS}^{-1}}$ is not well-defined if $\matW_\calS$ is rank-deficient. So, we use the $\calG_0$ returned by \Cref{alg:LI_Schedule_FullB} as an initial schedule, which ensures that the metric is well-defined as the inverse exists. Thus, we consider the following optimization problem,
\begin{equation}
 \label{eq:opt_energy_epsilon_mod}
\underset{\calT \in  2^{\calV \setminus \calT_0}}\min \; \trace { (\matW_{\calS(\calT\cup\calT_0)})^{-1} }
\;\text{s.t.} \;\calS(\calT\cup\calT_0)\in\Phi,
\end{equation}
where $\Phi$ is defined in \eqref{eq:feasible_set}, $\calT_0$ is the set satisfying $\calS(\calT_0)=\calG_0$ and $\matW_\calS$ is given by \eqref{eq:W_mat_Defn}. The above optimization is a combinatorial problem. We will show that the objective function is $\alpha$-supermodular and the constraint $\Phi$ is a matroid, motivating a greedy approach to approximately solve~\eqref{eq:opt_energy_epsilon_mod}.

The greedy algorithm starts with the set of indices $\mathcal{T}^{(1)} = \calT_0$ and finds the element from $\calV\setminus\calT_0$ that minimizes the cost function and satisfies \eqref{eq:feasible_set} when added to $\calT^{(1)}$. Specifically, in the $r$th iteration of the algorithm, let $\calT^{(r)}$ be the set of indices collected up to the previous iteration. Then, we find
\begin{equation}\label{eq:greedy_update}
    (k^*,j^*) = \underset{(k,j)\in  \calV^{(r)}}{\arg\min} \trace { \matW_{\calS(\calT^{(r)}\cup\{(k,j)\})}^{-1} },
\end{equation}
where $\calV^{(r)} \subseteq \calV\setminus\calT^{(r)}$ is obtained by removing all infeasible index pairs, i.e., 
\begin{equation} \label{eq:search_space}
    \calV^{(r)} = \left\{ (k,j) \in \calV \setminus \calT^{(r)}: \calS(\calT^{(r)} \cup (k,j)) \in \Phi \right\}.
\end{equation}
We note that $\calT_0 \subseteq \calT^{(r)}$ for all $r\geq1$. Therefore, the set $\calT^{(Ks+1-n)}$ is an approximate solution to problem \eqref{eq:opt_energy_epsilon_mod}. The overall procedure is outlined in \Cref{alg:greedy_algo_epsilon}, which we call the \emph{RB$n$-greedy actuator scheduling algorithm}. We note that $\calT^{(i)}$ from \Cref{sec:FullRankB_CtrlGuarantee} is different from $\calT^{(r)}$, as these are specific to the respective algorithms. The computational complexity of \Cref{alg:greedy_algo_epsilon} is $\calO(msK^2n^2)$.

\begin{algorithm}[t]
	\caption{RB$n$-greedy actuator scheduling} 
	\begin{algorithmic} [1]
		\Require System matrices $\matA,\matB$; sparsity level~$s$.
        \Statex \textbf{Initialization: } $r=1$, $\calT^{(r)} = \calT_0$, $\calV^{(r)}$ from \eqref{eq:search_space} 
    	\While{$\calV^{(r)} \neq \emptyset$}
            \State Find $(k^*,j^*)$ using \eqref{eq:greedy_update}
        	\State $\calT^{(r+1)}=\calT^{(r)}\cup\{(k^*,j^*)\}$
            \State $\calG =(\calG_0,\calG_1,\ldots,\calG_{K-1})= \calS(\calT^{(r+1)})$
            \If{ $\vert \calG_{k^*}\vert= s$}
        	\State $\calV^{(r+1)}= \calV^{(r)}\setminus \{(k^*,j), \; j=1,2,\ldots,m\}$
            \Else
            \State $\calV^{(r+1)} = \calV^{(r)}\setminus \{(k^*,j^*)\}$
        		\EndIf
        	\State $r \leftarrow r+1$
    	\EndWhile
		\Ensure Actuator schedule $\calG$
	\end{algorithmic}
	\label{alg:greedy_algo_epsilon}
\end{algorithm}

We have $\trace{\matW_\calG^{-1}} \leq \trace{\matW_{\calG_0}^{-1}}$ because \Cref{alg:greedy_algo_epsilon} adds a positive semi definite matrix of rank 1 to $\matW_{\calS(\calT^{(r)})}$. Hence, $\trace{\matW_{\calS(\calT^{(r+1)})}^{-1}}$ does not increase when $(k^*,j^*)$ is added to $\calT^{(r)}$. This shows that \Cref{alg:greedy_algo_epsilon} reduces the average control energy without violating the sparsity constraint.

Next, to analyze the performance of \Cref{alg:greedy_algo_epsilon}, we introduce the notions of supermodularity and matroid constraints: 
\begin{defn}[Supermodularity]\label{defn:submodular}
The set function $f : \calU \to \mathbb{R}$ is said to be $\alpha$-submodular if $\alpha \in \mathbb R_{+}$ is the largest number for which the following holds $\forall \calA \subseteq \calB \subseteq \calU$ and $\forall e \in \calU \setminus \calB$:
\begin{equation}
f(\calA \cup \{e\} ) - f(\calA) \geq {\alpha [f(\calB \cup \{e\} ) - f(\calB)]}.
\end{equation}
Also, the function $-f(\cdot)$ is said to be $\alpha$-supermodular. 
\end{defn}
\begin{defn}[Matroid]\label{defn:matroid}
A matroid is a pair $(\calU, \calI)$ where $\calU$ is a finite set and $\calI \subseteq 2^\calU$ satisfies the following three properties:
(i) $\emptyset \in \calI$; (ii) For any two sets, $\calA \subseteq \calB \subseteq \calU$, if $\calB \in \calI$, then $\calA \in \calI$; (iii) For any two sets, $\calA, \calB \in \calI$, if $\vert \calB \vert > \vert \calA \vert$, then there exists $e \in \calB \setminus \calA$ such that $\calA \cup \{e\} \in \calI$.
\end{defn}

The following result establishes that the cost function of the optimization problem in \eqref{eq:opt_energy_epsilon_mod} is $\alpha$-supermodular and that its constraint set is a matroid. 
\begin{prop} \label{prop:submodular_proof}
The objective function in \eqref{eq:opt_energy_epsilon_mod}, $E(\calT) \triangleq \trace { (\matW_{\calS(\calT\cup\calT_0)})^{-1} }$ is $\alpha$-supermodular with $\alpha$ satisfying 
\begin{equation} \label{eq:submodular_proof}
    \alpha \geq \cfrac{\lambda_{\min}(\matW_{\calG_0})}{\lambda_{\max}(\matW )} > 0,
\end{equation}
 where $\matW \triangleq \sum_{k=1}^{K} \matA^{k-1} \matB \matB^{\T} (\matA^{k-1})^{\T}$ and $\lambda_{\max}(\cdot), \lambda_{\min}(\cdot)$ are the largest and smallest eigenvalues. Moreover, pair \textcolor{black}{($\calV \setminus \calT_0$, $\lc\calT:\;\calS(\calT \cup \calT_0)\in\Phi \rc$)} is a matroid.
\end{prop}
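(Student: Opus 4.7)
The plan is to establish the two parts of Proposition~\ref{prop:submodular_proof} separately. The matroid part is structural; the supermodularity part reduces to a Sherman--Morrison calculation followed by a careful ratio bound.

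For the matroid claim, I would observe that the ground set $\calV \setminus \calT_0$ admits a natural partition by time index: let $\calV_k \triangleq \{(k,j) \in \calV \setminus \calT_0 : j=1,\ldots,m\}$ for $k=0,\ldots,K-1$, and let $s_k \triangleq s - |\calS_k(\calT_0)|$ be the residual sparsity budget at time $k$. Then $\calS(\calT \cup \calT_0) \in \Phi$ is equivalent to $|\calT \cap \calV_k| \le s_k$ for every $k$, so $\calI$ is exactly a \emph{partition matroid}. I would verify the three axioms directly: emptiness and the hereditary property are immediate, and for the augmentation property I would use pigeonhole---if $|\calB| > |\calA|$, some block $\calV_k$ must satisfy $|\calB \cap \calV_k| > |\calA \cap \calV_k|$, so any $e \in (\calB \setminus \calA) \cap \calV_k$ is feasible to add to $\calA$ since $|\calA \cap \calV_k| < |\calB \cap \calV_k| \le s_k$.

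For the $\alpha$-supermodularity part, I would write $\matW_\calA \triangleq \matW_{\calS(\calA \cup \calT_0)}$ and $\matW_\calB \triangleq \matW_{\calS(\calB \cup \calT_0)}$, and note that adding a single element $e=(k,j)$ amounts to a rank-one update $\matW \mapsto \matW + \vecv \vecv^\T$ with $\vecv = \matA^{K-k-1}\matB_j$. Applying the Sherman--Morrison identity gives the closed-form marginal decrement
\begin{equation*}
E(\calA) - E(\calA \cup \{e\}) = \frac{\vecv^\T \matW_\calA^{-2} \vecv}{1 + \vecv^\T \matW_\calA^{-1} \vecv},
\end{equation*}
and analogously for $\calB$. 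The goal then becomes establishing $[E(\calA) - E(\calA\cup\{e\})] \geq \alpha\,[E(\calB) - E(\calB\cup\{e\})]$. The chain $\matW_{\calG_0} \preceq \matW_\calA \preceq \matW_\calB \preceq \matW$ (from $\calT_0 \subseteq \calA\cup\calT_0 \subseteq \calB\cup\calT_0 \subseteq \calV$) immediately yields $\vecv^\T\matW_\calA^{-2}\vecv \geq \vecv^\T\matW_\calB^{-2}\vecv$, which disposes of the numerator.

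The main obstacle is controlling the denominator ratio $(1+\vecv^\T\matW_\calB^{-1}\vecv)/(1+\vecv^\T\matW_\calA^{-1}\vecv)$, which is at most $1$ in the wrong direction. Writing $a = \vecv^\T\matW_\calA^{-1}\vecv \geq b = \vecv^\T\matW_\calB^{-1}\vecv$, I would use the elementary inequality $(1+b)/(1+a) \geq b/a$ (verified by cross-multiplication, since $a\geq b \geq 0$) to reduce the problem to lower-bounding $b/a$. Splitting as $b/a = (\vecv^\T\matW_\calB^{-1}\vecv/\|\vecv\|^2)\cdot(\|\vecv\|^2/\vecv^\T\matW_\calA^{-1}\vecv)$ and applying the Rayleigh bounds together with $\lambda_{\min}(\matW_\calA) \geq \lambda_{\min}(\matW_{\calG_0})$ and $\lambda_{\max}(\matW_\calB) \leq \lambda_{\max}(\matW)$, yields $b/a \geq \lambda_{\min}(\matW_{\calG_0})/\lambda_{\max}(\matW)$. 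Positivity of this bound follows from Theorem~\ref{thm:FullB_Ctrl_Guarantee}, which guarantees $\matW_{\calG_0} \succ \zero$. Combining gives the claimed $\alpha$ and completes the proof.
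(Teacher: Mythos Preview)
Your matroid argument is correct and matches the paper's (brief) proof sketch: the constraint is recognized as a partition matroid on $\calV\setminus\calT_0$ with per-block budgets $s_k = s - |\{j:(k,j)\in\calT_0\}|$, exactly the modification the paper cites when referring to \cite{Kondapi_icc_2024}. Your Sherman--Morrison computation of the marginal decrement and your denominator bound $(1+b)/(1+a)\ge b/a\ge \lambda_{\min}(\matW_{\calG_0})/\lambda_{\max}(\matW)$ are also fine.

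The gap is the numerator step. You assert that $\matW_\calA \preceq \matW_\calB$ ``immediately yields $\vecv^\T\matW_\calA^{-2}\vecv \geq \vecv^\T\matW_\calB^{-2}\vecv$.'' This would require $t\mapsto t^{-2}$ to be operator anti-monotone on positive-definite matrices, which it is not (only $t^r$ with $r\in[-1,0]$ are). It is easy to build $2\times 2$ positive-definite $\matW_\calA\preceq\matW_\calB$, even with $\matW_\calB-\matW_\calA$ rank one as in the present setting, for which $\matW_\calA^{-2}-\matW_\calB^{-2}$ is indefinite; then your inequality fails for some~$\vecv$. So the numerator ratio cannot be bounded below by~$1$ in the way you claim.

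The fix is to split the ratio differently. With $a=\vecv^\T\matW_\calA^{-1}\vecv\ge b=\vecv^\T\matW_\calB^{-1}\vecv>0$, write
\begin{equation*}
\frac{E(\calA)-E(\calA\cup\{e\})}{E(\calB)-E(\calB\cup\{e\})}
= \frac{\vecv^\T\matW_\calA^{-2}\vecv\,/\,a}{\vecv^\T\matW_\calB^{-2}\vecv\,/\,b}\cdot\frac{a(1+b)}{b(1+a)}.
\end{equation*}
The second factor is $\ge 1$ since $a\ge b$. For the first, set $\vecw=\matM^{-1/2}\vecv$ to recognize $\vecv^\T\matM^{-2}\vecv/\vecv^\T\matM^{-1}\vecv$ as the Rayleigh quotient of $\matM^{-1}$ at $\vecw$, hence lying in $[\lambda_{\max}(\matM)^{-1},\lambda_{\min}(\matM)^{-1}]$. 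Applying this with $\matW_\calA\preceq\matW$ (upper-bounding $\lambda_{\max}(\matW_\calA)$) and $\matW_\calB\succeq\matW_{\calG_0}$ (lower-bounding $\lambda_{\min}(\matW_\calB)$) gives the first factor $\ge \lambda_{\min}(\matW_{\calG_0})/\lambda_{\max}(\matW)$, and the claimed bound follows. This corrected split is the standard route in the literature to which the paper's proof defers.
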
 
\begin{proof}
See \Cref{app:submodular_proof}.
\end{proof}

From \Cref{prop:submodular_proof}, we have the following guarantee for \Cref{alg:greedy_algo_epsilon}. Its proof follows along the lines of \cite[Theorem 1]{Kondapi_icc_2024}; the only differences are that $\tilde{E}$ in~\cite{Kondapi_icc_2024} is replaced with $\trace{\hspace{-0.5mm}\matW_{\calG}^{-1}}$ and $n/\epsilon$ is replaced with $\trace{\hspace{-0.5mm}\matW_{\calG_0}^{-1}}$.
\begin{theorem}
\label{thm:Act_sch_guarantee}
    Let $\beta = \min \lb \frac{\alpha}{2}, \frac{\alpha}{1+\alpha} \rb$, where $\alpha$ is the submodularity constant of the cost function in \eqref{eq:opt_energy_epsilon_mod}. The cost function corresponding to the set $\calG$ returned by \Cref{alg:greedy_algo_epsilon} satisfies
\begin{equation}\label{eq:Act_sch_guarantee_2}
\trace{\hspace{-0.5mm}\matW_\calG^{-1}} \leq \ls 1-\beta \rs \trace{\matW_{\calG_0}^{-1}} + \beta E^*,
\end{equation} 
where $E^* \leq \trace{\matW_{\calG_0}^{-1}}$ is the cost corresponding to the optimal solution of~\eqref{eq:opt_energy_epsilon_mod}. 
\end{theorem}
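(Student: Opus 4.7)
The plan is to reduce this to the standard greedy approximation guarantee for monotone $\alpha$-submodular maximization under a matroid constraint, since \Cref{prop:submodular_proof} already provides the two ingredients needed (supermodularity of $E$ and the matroid property of the feasible sets).

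First, I would convert the minimization in \eqref{eq:opt_energy_epsilon_mod} into an equivalent maximization by introducing the shifted function
\begin{equation*}
g(\calT) \triangleq \trace{\matW_{\calG_0}^{-1}} - \trace{\matW_{\calS(\calT\cup\calT_0)}^{-1}}, \qquad \calT\subseteq \calV\setminus\calT_0.
\end{equation*}
From the definition of $\matW_{\calS}$ in \eqref{eq:W_mat_Defn}, adding a pair $(k,j)$ to $\calT$ amounts to adding a rank-one positive semi-definite matrix $\matA^{K-k-1}\matB_j\matB_j^{\T}(\matA^{K-k-1})^{\T}$ to $\matW_{\calS(\calT\cup\calT_0)}$. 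Since $\matW_{\calG_0}\succ\zero$ by \Cref{thm:FullB_Ctrl_Guarantee}, standard monotonicity of the trace-inverse under the Loewner order yields $g(\emptyset)=0$, $g(\calT)\geq 0$, and $g$ monotone non-decreasing. Because $g$ differs from $-E$ only by a constant, \Cref{prop:submodular_proof} shows that $g$ is $\alpha$-submodular with the same $\alpha$ as in \eqref{eq:submodular_proof}.

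Next, I would observe that \Cref{alg:greedy_algo_epsilon} is exactly the classical greedy algorithm for $g$: at iteration $r$ the rule \eqref{eq:greedy_update} picks the feasible element $(k^\ast,j^\ast)\in\calV^{(r)}$ that maximizes the marginal $g(\calT^{(r)}\cup\{(k,j)\})-g(\calT^{(r)})$, and the update of $\calV^{(r)}$ simply enforces the matroid constraint from \Cref{prop:submodular_proof}. Thus, the setting is precisely monotone, non-negative $\alpha$-submodular maximization over a matroid.

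I then invoke the greedy guarantee for this setting, exactly as in \cite[Theorem 1]{Kondapi_icc_2024}: if $\calT^\ast$ is an optimal set for \eqref{eq:opt_energy_epsilon_mod} and $\calT^{g}$ is the one returned by the greedy algorithm (so $\calS(\calT^{g}\cup\calT_0)=\calG$), then
\begin{equation*}
g(\calT^{g}) \;\geq\; \beta\, g(\calT^\ast), \qquad \beta = \min\!\lb\tfrac{\alpha}{2},\tfrac{\alpha}{1+\alpha}\rb.
\end{equation*}
Unfolding the definition of $g$ and writing $E^\ast=\trace{\matW_{\calS(\calT^\ast\cup\calT_0)}^{-1}}$ rearranges into \eqref{eq:Act_sch_guarantee_2}; the bound $E^\ast\leq\trace{\matW_{\calG_0}^{-1}}$ is automatic because $\calT_0$ (equivalently $\calT=\emptyset$ in the shifted formulation) is feasible. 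The main obstacle is the careful verification that monotonicity, non-negativity and the $\alpha$-submodularity constant all transfer cleanly from $E$ to the shifted $g$, together with checking that the feasibility update in \Cref{alg:greedy_algo_epsilon} matches the matroid independence structure of \Cref{prop:submodular_proof}; once these are in place, invoking the matroid-greedy bound is immediate and the calculation is routine.
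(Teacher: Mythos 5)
Your proposal is correct and follows essentially the same route as the paper: the paper's proof likewise reduces the claim to the matroid-constrained greedy guarantee of \cite[Theorem 1]{Kondapi_icc_2024} built on \Cref{prop:submodular_proof}, with the empty-schedule cost $n/\epsilon$ replaced by $\trace{\matW_{\calG_0}^{-1}}$ — which is exactly what your shifted function $g(\calT)=\trace{\matW_{\calG_0}^{-1}}-\trace{\matW_{\calS(\calT\cup\calT_0)}^{-1}}$ implements. Your explicit verification of monotonicity, $g(\emptyset)=0$, transfer of the constant $\alpha$, and feasibility of $\calT=\emptyset$ (giving $E^*\leq\trace{\matW_{\calG_0}^{-1}}$) just spells out the substitutions the paper states.
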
 

\Cref{thm:Act_sch_guarantee} shows that the cost function evaluated at the output of \Cref{alg:greedy_algo_epsilon} is upper bounded by a convex combination of $\trace{\matW_{\calG_0}^{-1}}$ and $E^*$. 
Therefore, to minimize this bound, we next discuss how to choose $\calG_0$ leveraging \Cref{alg:LI_Schedule_FullB}. 
To elaborate, in step 3 of \Cref{alg:LI_Schedule_FullB}, 
there are potentially multiple ways of choosing $l(i)$ LI columns. Among these choices, we can select a subset of columns that are not only linearly independent of the columns chosen till that iteration, 
but also minimize $\trace{\matW_{\calG_0}^{-1}}$. In particular, we can minimize $\trace{\lb\matW_\calS + \eps\matI\rb^{-1}}$ in a greedy fashion to find a $\calG_0$ that ensures controllability and also reduces $\trace{\matW_{\calG_0}^{-1}}$. The term $\eps\matI$ ensures that the inverse is computable. Thus, the procedure described in \Cref{alg:greedy_algo_epsilon} can be incorporated in \Cref{alg:LI_Schedule_FullB}  to find $\calG_0$. We can iteratively pick $l(k)$ columns one-by-one from $\matA^{k}\matB$ that greedily minimizes $\trace{\lb\matW_\calS + \eps\matI\rb^{-1}}$, and thereby find a schedule $\calG_0$ that not only guarantees controllability ($\rank{\matW_{\calG_0}} = n$) but also reduces $\trace{\matW_{\calG_0}^{-1}}$.

Furthermore, suppose that, for some schedule $\calS$, we have $\rank{\matW_{\calS}}=R$, which implies $\matW_{\calS}$ has $R$ nonzero eigenvalues $\{\lambda_i\}_{i=1}^{R}$. Then, for $\epsilon > 0$, the objective function's value~is
\begin{align} \label{eq:approx_obj_fn}
	\trace{(\matW_{\calS} + \epsilon \matI)^{-1}} &= \sum_{i=1}^{R} \frac{1}{\lambda_i+\epsilon} + {\frac{n-R}{\epsilon}}.
\end{align}
We see that the term $(n - R)/\epsilon$ acts as a penalty on the cost when $\matW_{\calS}$ is rank deficient. Due to this, the greedy algorithm picks linearly independent columns in each iteration when $\eps$ is sufficiently small, as asserted in the following proposition. The proof is similar to \cite[Proposition 1]{Kondapi_icc_2024} and is omitted.
\begin{prop} \label{prop:greed_finds_LI}
    Suppose there exist two schedules $\tilde{\calS}, \hat{\calS} \in \Phi$ such that $\rank{\matW_{\tilde{\calS}}}=R$ and $\rank{\matW_{\hat{\calS}}}=R+1$. Then, there exists $\epsilon^*>0$ such that for any $\epsilon<\epsilon^*$, the greedy algorithm chooses $\hat{\calS}$ over $\tilde{\calS}$.
    \end{prop}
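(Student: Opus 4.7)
The plan is to compute the $\epsilon$-regularized greedy objective $\trace{(\matW_\calS+\epsilon\matI)^{-1}}$ on the two candidate schedules in closed form using the spectral identity \eqref{eq:approx_obj_fn}, and then show that the $O(1/\epsilon)$ contribution coming from the extra zero eigenvalue of $\matW_{\tilde{\calS}}$ dominates all other terms as $\epsilon \to 0^+$. This forces the higher-rank schedule $\hat{\calS}$ to have a strictly smaller objective, which is exactly what the greedy selection rule \eqref{eq:greedy_update} (applied to the regularized Gramian) picks.

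First, I would let $\{\tilde{\lambda}_i\}_{i=1}^{R}$ and $\{\hat{\lambda}_i\}_{i=1}^{R+1}$ denote the strictly positive eigenvalues of $\matW_{\tilde{\calS}}$ and $\matW_{\hat{\calS}}$, respectively. Applying \eqref{eq:approx_obj_fn} to both schedules and subtracting yields
\begin{equation*}
\trace{(\matW_{\tilde{\calS}}+\epsilon\matI)^{-1}} - \trace{(\matW_{\hat{\calS}}+\epsilon\matI)^{-1}} = \frac{1}{\epsilon} + \underbrace{\sum_{i=1}^{R}\frac{1}{\tilde{\lambda}_i+\epsilon} - \sum_{i=1}^{R+1}\frac{1}{\hat{\lambda}_i+\epsilon}}_{=:\,\Delta(\epsilon)},
\end{equation*}
where the lone $1/\epsilon$ term arises as $\tfrac{n-R}{\epsilon}-\tfrac{n-(R+1)}{\epsilon}$, reflecting the one extra null direction in $\matW_{\tilde{\calS}}$ relative to $\matW_{\hat{\calS}}$.

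Next, I would argue that $\Delta(\epsilon)$ is uniformly bounded for $\epsilon \geq 0$. Since both schedules are fixed and their nonzero eigenvalues are strictly positive, the crude bound $|\Delta(\epsilon)| \leq \sum_{i=1}^{R}1/\tilde{\lambda}_i + \sum_{i=1}^{R+1}1/\hat{\lambda}_i =: C$ holds for every $\epsilon \geq 0$. Choosing $\epsilon^* \triangleq 1/(C+1)$ then guarantees $1/\epsilon + \Delta(\epsilon) \geq 1/\epsilon - C > 0$ for all $\epsilon < \epsilon^*$. Hence $\trace{(\matW_{\hat{\calS}}+\epsilon\matI)^{-1}} < \trace{(\matW_{\tilde{\calS}}+\epsilon\matI)^{-1}}$, and the greedy algorithm strictly prefers $\hat{\calS}$ to $\tilde{\calS}$.

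There is no real obstacle in this argument; it is a dominant-term calculation that hinges only on the rank gap producing a single unbounded $1/\epsilon$ term in the trace difference, with every other term remaining bounded in $\epsilon$. The tightness of the bound on $\Delta(\epsilon)$ is immaterial because only the sign of the difference matters, and this mirrors the reasoning used in \cite[Proposition~1]{Kondapi_icc_2024}, to which the authors explicitly defer.
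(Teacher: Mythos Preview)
Your argument is correct and is precisely the dominant-term computation the paper has in mind: the paper omits the proof here and simply points to \cite[Proposition~1]{Kondapi_icc_2024}, whose reasoning is exactly the spectral calculation via \eqref{eq:approx_obj_fn} that you carry out, namely that the single extra $1/\epsilon$ contribution from the additional null direction of $\matW_{\tilde{\calS}}$ overwhelms the bounded remainder $\Delta(\epsilon)$ for all sufficiently small $\epsilon$.
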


Recall that \Cref{prop:MaxLIColumns} implies the existence of LI columns that satisfy the sparsity constraint and ensures controllability when $\rank{\matB}=n$. Hence, we can use \Cref{alg:LI_Schedule_FullB} and \Cref{alg:greedy_algo_epsilon} to find $\calG_0$ that ensures controllability and also reduces $\trace{\matW_{\calG_0}^{-1}}$. 

\emph{Remark:} The actuator scheduler algorithms presented in \Cref{sec:greedy_sch,sec:FullRankB_CtrlGuarantee} do not depend on the initial state $\vecx(0)$. However, the initial state information is required to compute the inputs $\{\vecu(k)\}_{k=0}^{k=K-1}$. Specifically, when the initial state is known, we can compute the inputs using the least squares method using the relation, 
\begin{equation}\label{eq:control_eq}
	\vecx_f - \matA^K \vecx(0) = \matR \vecu_{(K)},
\end{equation}
where $\vecu_{(K)} \triangleq 
 \begin{bmatrix}
   \vecu(0)^{\T} & \vecu(1)^{\T} & \ldots & \vecu(K-1)^{\T} 
\end{bmatrix}
^{\T}$ is a concatenated set of input vectors and is said to be \emph{piecewise sparse}. Also, $\matR$ is the controllability matrix obtained by considering $\calS_i=\{1,2,\ldots,m\}$ for $i = \{0,1,\ldots,K-1\}$ in \eqref{eq:control_con}. 
The support of $\vecu(k)$ is $\calG_k$, where $\calG=(\calG_0, \calG_1, \ldots, \calG_{K-1})$. A solution to \eqref{eq:control_eq} exists, since $\rank{\matR_\calG} = n$.


When the initial state $\vecx(0)$ is unknown, we need to collect measurements to first estimate $\vecx(0)$ and then design the inputs $\vecu(k)$. We address this issue in the next section.

\section{Initial State Unknown}
\label{sec:initstate_unknown}
As mentioned above, when the initial state $\vecx(0)$ is unknown, it must be estimated using measurements of the system state. We consider measurements $\vecy(k)$ at time $k$ of the form
$\vecy (k) = \matC \vecx (k),$ 
where $\matC\in\mathbb{R}^{p\times n}$ is the measurement matrix. We denote the LDS described by the above measurement equation along with \eqref{eq:state} by the tuple $(\matA,\matB,\matC)$. 

We use the measurements $\vecy(k)$ with arbitrary sparse inputs applied over $\tilde{K}$ time steps to solve for $\vecx(0)$. As the inputs are known, we can remove their contribution to the measurements. This is equivalent to applying all zero inputs over $\tilde{K}$ time steps, and the corresponding measurement sequence is
\begin{equation}\label{eq:solve_x0}
[\vecy(0)^{\T} \vecy(1)^{\T} \ldots \vecy(\tilde{K}-1)^{\T}]^{\T} = \matO^{(\tilde{K})} \vecx(0).
\end{equation}
Here, $\matO^{(\tilde{K})}\in\bbR^{\tilde{K}p\times n}$ is the observability matrix of the LDS: 
\begin{equation}
\matO^{(\tilde{K})} \triangleq
\begin{bmatrix}
\matC^{\T} & (\matC \matA)^{\T} & (\matC \matA^2)^{\T}  \ldots (\matC \matA^{\tilde{K}-1})^{\T}
\end{bmatrix}^{\T}.
\end{equation}

Whether \eqref{eq:solve_x0} admits a unique solution depends on the observability of the LDS $(\matA,\matB,\matC)$. So, in the following, we look at two settings: observable and unobservable systems.
 
\emph{Observable Systems:} If the LDS is observable, we can uniquely determine $\vecx(0)$ from \eqref{eq:solve_x0}. After computing the initial state, we can use the algorithms developed in \Cref{sec:greedy_sch,sec:FullRankB_CtrlGuarantee} to compute the actuator schedule and inputs using the least squares method to drive the LDS to the desired state.

Similar to the bound in \eqref{eq:sparse_cntrl_bounds}, the bounds on $\tilde{K}$ for the system to be observable are given by~\cite[Section 6.3.1]{linear_sys} 
\begin{equation}
	\frac{n}{\rank{\matC}} \leq \tilde{K} \leq \min(q, n-\rank{\matC}+1) \leq n,
\end{equation}
with $q$ as defined in \eqref{eq:sparse_cntrl_bounds}. Combining the bound with \eqref{eq:sparse_cntrl_bounds}, the total time steps required to drive the LDS to the desired state using $s$-sparse inputs changes to 
 \begin{multline}
 \frac{n}{\rank{\matC}} + \frac{n}{s} \leq K^* \leq \min(q, n-\rank{\matC}+1) \\
  + \min\lb q \left\lceil \frac{\rank{\matB}}{s} \right\rceil, n-s+1 \rb.
\end{multline}
So, for any observable and $s$-sparse controllable LDS, the maximum number of time steps required to drive the LDS to the desired state is $K^*\leq2n$ when $\vecx(0)$ is unknown. 

We note that transmitting $p$ measurements at each time step from the sensor(s) to the controller incurs a communication overhead. To address this, we can employ \emph{sensor scheduling}, where we transmit at most $\tilde{s}$ measurements to the controller/observer in each time step. Here, the goal is to choose a subset of sensors $\tilde{\calS}_{k} \subseteq \{1,2,\ldots,p\}$ at time $k \in \{0,1,\ldots,\tilde{K}-1\}$ with $\vert \tilde{\calS}_{k} \vert < \tilde{s}$, such that the system remains observable, i.e.,  $\rank{\matO_{\tilde{\calS}}^{(\tilde{K})}} = n$, where
\begin{equation*}
    \matO_{\tilde{\calS}}^{(\tilde{K})}= 
    \begin{bmatrix}
        (\matC^\T)_{\tilde{\calS}_{0}} \! &  \! \matA^\T(\matC^\T)_{\tilde{\calS}_{1}} \! & \! \ldots &  \!(\matA^{\tilde{K}-1})^\T(\matC^\T)_{\tilde{\calS}_{\tilde{K}-1}}
    \end{bmatrix}^{\T}.
\end{equation*}
The sensor scheduling and actuator scheduling problems are duals of each other. So, we can apply results from~\cite{gjoseph2020controllability} to characterize the minimum number of time steps needed to observe and control the system. From the equivalence to sparse controllability, we need $\tilde{s} \geq n-\rank{\matA}$. We can then repurpose \Cref{alg:LI_Schedule_FullB} to find a tuple of subsets $\tilde{\calS}$  of rows of $\matO^{(\tilde{K})}$ that guarantees observability (i.e., $\rank{\matO_{\tilde{\calS}}^{(\tilde{K})}}=n$) when $\rank{\matC}=n$. In this case, $\tilde{K}$ can be bounded as
\begin{equation*}
    \frac{n}{\tilde{s}} \leq \tilde{K} \leq \min\lb q \left\lceil \frac{\rank{\matC}}{\tilde{s}} \right\rceil, n-\tilde{s}+1 \rb \leq n,
\end{equation*}
So, in the first $\Tilde{K}$ time steps, we can observe the system and compute $\vecx(0)$. In the next $K$ time steps, we can apply $s$-sparse control inputs to drive the system to the desired state. Next, we present a result that guarantees the observability and controllability of the LDS with only a few active sensors and actuators per time step.

\begin{cor}
    If $\rank{\matB} = n, \rank{\matC} = n$, and $s, \tilde{s} \geq \max\{1,n-\rank{\matA}\}$, then the LDS $(\matA,\matB,\matC)$ can be driven from any unknown initial state $\vecx(0)$ to any target state $\vecx_f$ in $K^* = K+\tilde{K}$ time steps, where $\tilde{K} \geq \lceil\frac{n}{\tilde{s}}\rceil$ and $K \geq \lceil\frac{n}{s}\rceil$.
\end{cor}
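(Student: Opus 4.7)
The plan is to decompose the task into an observation phase of $\tilde{K}$ steps followed by a control phase of $K$ steps, and to invoke \Cref{thm:FullB_Ctrl_Guarantee} twice: once in the observability (dual) setting to guarantee reconstruction of $\vecx(0)$ from sparse measurements, and once in the original setting to guarantee that the system can be driven from the (now known) state at time $\tilde{K}$ to the target $\vecx_f$.

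First, I would address the observation phase. By the duality between observability of $(\matA,\matC)$ and controllability of $(\matA^\T,\matC^\T)$, the row-selection problem for $\matO^{(\tilde{K})}$ is structurally identical to the column-selection problem that \Cref{alg:LI_Schedule_FullB} solves for a controllability matrix. Since $\rank{\matC}=n$ and $\tilde{s}\geq\max\{1,n-\rank{\matA}\}$, applying \Cref{thm:FullB_Ctrl_Guarantee} to $(\matA^\T,\matC^\T)$ yields a sensor schedule $\tilde{\calS}=(\tilde{\calS}_0,\ldots,\tilde{\calS}_{\tilde{K}-1})$ with $|\tilde{\calS}_k|\leq\tilde{s}$ and $\rank{\matO_{\tilde{\calS}}^{(\tilde{K})}}=n$, provided $\tilde{K}\geq\lceil n/\tilde{s}\rceil$. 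During these $\tilde{K}$ steps, any admissible (e.g., zero) sparse inputs can be applied, and their known contribution subtracted from $\vecy(k)$, reducing the measurement model to \eqref{eq:solve_x0}. Because $\matO_{\tilde{\calS}}^{(\tilde{K})}$ has rank $n$, the resulting linear system admits a unique solution for $\vecx(0)$, from which $\vecx(\tilde{K})$ is deterministically computed via the state equation \eqref{eq:state}.

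Second, I would handle the control phase. Starting from the known state $\vecx(\tilde{K})$, I invoke \Cref{thm:FullB_Ctrl_Guarantee} directly on $(\matA,\matB)$: since $\rank{\matB}=n$, $s\geq\max\{1,n-\rank{\matA}\}$, and $K\geq\lceil n/s\rceil$, \Cref{alg:LI_Schedule_FullB} returns an actuator schedule $\calG$ with $|\calG_k|\leq s$ and $\rank{\matR_\calG}=n$. Consequently, the equation
\begin{equation*}
\vecx_f-\matA^K\vecx(\tilde{K})=\matR_\calG\,\vecu_{(K)}
\end{equation*}
is consistent in the variable $\vecu_{(K)}$ whose $k$th block has support contained in $\calG_k$, and a solution can be obtained by least squares as in \eqref{eq:control_eq}. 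Applying these sparse inputs during time steps $\tilde{K},\ldots,\tilde{K}+K-1$ drives the state exactly to $\vecx_f$, yielding the total budget $K^*=K+\tilde{K}$.

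I do not anticipate a major technical obstacle: the corollary is a direct consequence of \Cref{thm:FullB_Ctrl_Guarantee} applied to the primal and dual problems, combined with the standard observability/least-squares reconstruction. The only subtle point worth making explicit is the duality argument that justifies using \Cref{alg:LI_Schedule_FullB} on the observability problem and, relatedly, interpreting the sparsity bound $\tilde{s}\geq n-\rank{\matA}$ (equivalently $n-\rank{\matA^\T}$) as the feasibility condition arising from the dual of \eqref{eq:sparse_cntrl_bounds}. Once this correspondence is stated, the two phases chain together without further complication.
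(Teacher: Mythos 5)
Your proposal is correct and follows essentially the same route as the paper: the paper's own proof also invokes \Cref{thm:FullB_Ctrl_Guarantee} twice, once via the observability/controllability duality to build a sensor schedule recovering $\vecx(0)$ in $\tilde{K}\geq\lceil n/\tilde{s}\rceil$ steps, and once to build an actuator schedule driving the state from $\matA^{\tilde{K}}\vecx(0)$ to $\vecx_f$ in $K\geq\lceil n/s\rceil$ steps. Your write-up simply makes the duality and least-squares reconstruction steps more explicit than the paper does.
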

\begin{proof}
    The proof follows from \Cref{thm:FullB_Ctrl_Guarantee}. We can find a sensor schedule that recovers $\vecx(0)$ in $\tilde{K} \geq \lceil\frac{n}{\tilde{s}}\rceil$ time steps using \Cref{alg:LI_Schedule_FullB}, and similarly find an actuator schedule to drive the state from $A^{\tilde{K}}\vecx(0)$ (without loss of generality (w.l.o.g.), no inputs are applied during the initial state estimation) to $\vecx_f$ in $K \geq \lceil\frac{n}{s}\rceil$ time steps.
\end{proof}
This shows that $\Tilde{s}$ measurements and $s$-sparse control inputs per time step are sufficient to drive the LDS to any target state $\vecx_f$, even when the initial state $\vecx(0)$ is unknown.

\emph{Unobservable Systems:} Suppose the LDS is unobservable, but $s$-sparse controllable. Then, the solution $\hat{\vecx}(0)$ to \eqref{eq:solve_x0} yields an estimate, $\hat{\vecx}(\tilde{K}) = \matA^{\tilde{K}}(\vecx(0) + \vecz)$  (w.l.o.g. no inputs are applied during the initial-state estimation), where $\vecz$ belongs to the null space of $\matO^{(\tilde{K})}$. Now, we can design the $s$-sparse control inputs to drive the system from $\hat{\vecx}(\tilde{K})$ to $\vecx_f$ for the next $K$ time steps. Then, at time $K^* = \tilde{K}+K$, the actual state of the system  is $\vecx(K^*) = \vecx_f - \matA^{K^*} \vecz$, and the output of the system is $\vecy(K^*) = \matC\vecx_f$. Hence, we conclude that for an unobservable LDS with unknown initial state, when $\rank{\matB}  = n$, the LDS can be driven to the desired \emph{output} state $\matC \vecx_f$ using \Cref{alg:greedy_algo_epsilon}. 

In the next section, we consider the problem of reaching a target state $\vecx_f$ using sparse control inputs for an LDS with process and measurement noise.

\section{Sparse Control of Noisy Systems} \label{sec:Nsy_Sys}
The algorithms presented in \Cref{sec:FullRankB_CtrlGuarantee} and \Cref{sec:greedy_sch} provide a sparse actuator schedule that can be used to drive the LDS from any given initial and final state, under a noiseless setting. If the same schedule is employed in LDSs affected by noise, the resulting trajectory may stray away from the desired final state. So, we need feedback control to compensate for both process noise and measurement noise. Specifically, we consider a system with additive Gaussian noise in the state evolution and measurement equations:
\begin{align}
\vecx (k+1) &= \matA \vecx (k) + \matB \vecu(k) + \vecv(k) \label{eq:state_noise} \\
\vecy (k) &= \matC \vecx (k) + \vecw(k), \label{eq:obsvn_noise}
\end{align}
where the noise terms $\vecv(k) \sim \calN(\zero$, $\bs{\Sigma}_{v})$, $\vecw(k) \sim \calN(\zero, \bs{\Sigma}_{w})$ are mutually independent and are drawn from known distributions. We consider the initial state $\vecx(0)$ to be known. Our results also hold when $\vecx(0)$ is drawn from a Gaussian distribution $\calN(\vecx_0,\bs{\Sigma}_0)$ independent of the noises. 

We aim to reach a target state $\vecx_f$ using sparse control inputs at each time instant. We cannot determine the number of time steps required for the system to reach $\vecx_f$ as done in the previous section, due to noise. So, we cannot consider a finite horizon linear quadratic Gaussian problem similar to~\cite{Chamon2022ApproxSuperMatroid}; instead, we consider a Kalman filtering-based approach 
to estimate $\vecx(k)$ and determine the sparse inputs to be applied to drive the next state of the system as close to the target state as possible. 
Let the minimum mean square error (MMSE) estimate of the state $\vecx(k)$ produced by the Kalman filter based on the measurements $\vecy(0), \vecy(1), \ldots, \vecy(k)$ be denoted by $\hat{\vecx}(k)$. 
Using this estimate, we find the input $\vecu(k)$ such that $\matA \hat{\vecx}(k) + \matB \vecu(k)$ is close to $\vecx_f$. That is, we solve the problem 
\begin{equation}
\label{eq:kf_based_opt}
\min_{\vecu(k)} \; \Vert \vecx_f - \matA \hat{\vecx}(k) - \matB \vecu(k) \Vert \;
\text{s.t.} \; \Vert \vecu(k) \Vert_0 \leq s.
\end{equation}
One approach to solve \eqref{eq:kf_based_opt} is to use the orthogonal matching pursuit (OMP) algorithm~\cite{Pati_1993_OMP}, which adds an element that minimizes the objective to the support of $\vecu(k)$ at each iteration, in a greedy fashion. The pseudo-code for the overall solution that combines Kalman filtering with the OMP algorithm is summarized in \Cref{alg:sparse_control_noise}. The computational complexity of \Cref{alg:sparse_control_noise} is $\calO(n \max\{n^2, p^2,m,s^3\}+s^3m+p^3)$.

\begin{algorithm}[t]
	\caption{Algorithm to compute sparse input at time $k$}
	\begin{algorithmic}[1]
		\Require Desired state $\vecx_f$; System matrices $\matA,\matB ,\matC$; LDS output $\vecy(k) \in \mathbb{R}^p$, previous input $\vecu(k-1)$; previous state estimate $\hat{\vecx}(k-1)$; its covariance  $\matP_{k-1}$; Noise covariance $\matSigma_v,\matSigma_w$; Sparsity $s$ 
		\Statex\emph{\#Kalman filter}
		\State Compute prediction's covariance: $\hat{\matP}=\matA \matP_{k-1} \matA^{\T} + \bs{\Sigma}_{v}$
	    \State Compute Kalman gain $\matK = \hat{\matP} \matC^{\T} \lb \matC \hat{\matP} \matC^{\T} + \matSigma_w \rb^{-1}$
	    \State Estimate LDS state $\hat{\vecx}(k) = \matA \hat{\vecx}(k-1)+\matB\vecu(k-1) + \matK \lb \vecy(k) - \matC \matA \hat{\vecx}(k-1) \rb$ \label{line:kalman_est}
	    \State Compute estimate's covariance $\matP_{k} = (\eye- \matK \matC)\hat{\matP}$ \label{line:cov_update}
	    \Statex\emph{\#Sparse recovery}
		\State Solve for $\vecu(k)$ using OMP from \eqref{eq:kf_based_opt}
		\Ensure Input $\vecu(k)$; state estimate $\hat{\vecx}(k)$; its covariance $\matP_{k}$
	\end{algorithmic} \label{alg:sparse_control_noise}
\end{algorithm}

The OMP algorithm possesses the following theoretical guarantee on its solution~\cite{elad_book}
\begin{equation} \label{eq:omp_guarantee}
\Vert \vecx_f - \matA \hat{\vecx}(k) - \matB \vecu(k)  \Vert^2 \leq \xi^s(\matB) \Vert \vecx_f - \matA \hat{\vecx}(k) \Vert^2,
\end{equation} 
where $\xi(\matB)$ is the so-called \emph{universal decay-factor} of $\matB$, defined as
\begin{equation}\label{eq:defn_delB}
\xi(\matB) = 1-\inf_{\{\vecv \in \mathbb{R}^n | \Vert \vecv \Vert = 1 \} } \max_{1\leq j\leq m } \frac{\vert \matB_j^{\T} \vecv \vert^2}{\Vert \matB_j \Vert^2}.
\end{equation}
For example, the quantity $\xi(\matB) = 1-1/n$ when $\matB = \matI$. The above guarantee says that the norm of the residual reduces exponentially with $s$. 
Also, the OMP algorithm increments the support of $\vecu(k)$ such that the corresponding subset of columns $\matB$ are linearly independent. This feature of OMP leads to the following guarantee for our algorithm.
\begin{theorem}
\label{thm:noise_algo_guarantee}
Consider the LDS defined in \eqref{eq:state_noise} and \eqref{eq:obsvn_noise}. Let $(\matA, \mathbf{\Sigma}_v^{1/2})$ be controllable and $(\matA, \matC)$ be observable. Suppose we apply the input sequence produced by \Cref{alg:sparse_control_noise} (with the desired state $\vecx_f$) to the system. If $s> \max \{ -\frac{\log(2\Vert \matA \Vert^2)}{\log(\xi(\matB))}, 0 \}$ and $\rank{\matB}=n$, then for any $\eta > 0$, there exists $k_0$ such that for all $k \geq k_0$, the state sequence $\vecx(k)$ thus obtained satisfies 
\begin{multline}\label{eq:noise_algo_guarantee}
 \mathbb{E}\{ \Vert \vecx(k) - \vecx_f \Vert^2\}  \leq  \eta+\frac{2\xi^s(\matB)\Vert (\matA -\eye)\vecx_f\Vert^2}
{1-2\xi^s(\matB)\Vert \matA \Vert^2}\\+\frac{\trace{\bs{\Sigma}_v}+(1+3\xi^s(\matB)) \trace{\matA \matP \matA^{\T}}  }{1-2\xi^s(\matB)\Vert \matA \Vert^2}.
\end{multline}
Here, $\xi(\matB)$ is as defined in \eqref{eq:defn_delB} and $\matP \in \mathbb{R}^{n \times n}$ satisfies the following system of equations:
\begin{align} 
	\matP &= \matS - \matS \matC^T \left( \matC \matS \matC^T + \bs{\Sigma}_w \right)^{-1} \matC\matS \label{eq:KF_cnvg_1}\\
	\matS &= \matA \matP \matA^T + \bs{\Sigma}_v  .\label{eq:KF_cnvg_2}
\end{align} 
\end{theorem}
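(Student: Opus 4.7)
I would derive a linear one-step recursion for $\mathbb{E}\|\vecx(k)-\vecx_f\|^2$, show it contracts under the hypothesis on $s$, and pass to the steady state using standard Kalman filter convergence. Let $\tilde{\vecx}(k)\triangleq\vecx(k)-\hat{\vecx}(k)$ denote the Kalman state-estimation error and $\vecr(k)\triangleq\vecx_f-\matA\hat{\vecx}(k)-\matB\vecu(k)$ the OMP residual. Subtracting $\vecx_f$ from \eqref{eq:state_noise} and adding and subtracting $\matA\hat{\vecx}(k)$ immediately produces the identity $\vecx(k+1)-\vecx_f = -\vecr(k) + \matA\tilde{\vecx}(k) + \vecv(k)$, which is the starting point of the analysis.

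Squaring and taking expectation, cross-terms containing $\vecv(k)$ vanish because $\vecv(k)$ is independent of $\vecr(k)$ and $\tilde{\vecx}(k)$, and the standard Kalman filter identities $\mathbb{E}\{\tilde{\vecx}(k)\}=\zero$ and $\mathbb{E}\{\tilde{\vecx}(k)\tilde{\vecx}(k)^{\T}\}=\matP_k$ give $\mathbb{E}\|\matA\tilde{\vecx}(k)\|^2=\trace{\matA\matP_k\matA^{\T}}$. The OMP guarantee~\eqref{eq:omp_guarantee} bounds $\mathbb{E}\|\vecr(k)\|^2$ by $\xi^s(\matB)\,\mathbb{E}\|\vecx_f-\matA\hat{\vecx}(k)\|^2$. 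I would then decompose $\vecx_f-\matA\hat{\vecx}(k) = (\matI-\matA)\vecx_f - \matA(\vecx(k)-\vecx_f) + \matA\tilde{\vecx}(k)$ and invoke elementary inequalities of the form $\|\veca+\vecb\|^2\le 2\|\veca\|^2+2\|\vecb\|^2$ and $2|\veca^{\T}\vecb|\le \|\veca\|^2+\|\vecb\|^2$ to arrive at a recursion
\begin{equation*}
\mathbb{E}\|\vecx(k+1)-\vecx_f\|^2 \le \rho\,\mathbb{E}\|\vecx(k)-\vecx_f\|^2 + h(\matP_k,\vecx_f),
\end{equation*}
where $\rho\triangleq 2\xi^s(\matB)\|\matA\|^2$ and $h(\matP_k,\vecx_f)$ matches the numerator in~\eqref{eq:noise_algo_guarantee} with $\matP_k$ in place of $\matP$.

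The hypothesis $s>-\log(2\|\matA\|^2)/\log\xi(\matB)$ is exactly the condition $\rho<1$, so the recursion is a contraction. The standard Kalman convergence theorem---which requires controllability of $(\matA,\bs{\Sigma}_v^{1/2})$ and observability of $(\matA,\matC)$, both assumed---gives $\matP_k\to\matP$, where $\matP$ is the unique positive semi-definite solution of the DARE~\eqref{eq:KF_cnvg_1}--\eqref{eq:KF_cnvg_2}. For any $\eta>0$ one can choose $k_0$ so large that $h(\matP_k,\vecx_f)\le h(\matP,\vecx_f)+(1-\rho)\eta$ for all $k\ge k_0$; unrolling the recursion from $k_0$ and taking $k\to\infty$ then produces the fixed-point bound stated in~\eqref{eq:noise_algo_guarantee}. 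The main obstacle is the cross-term $\mathbb{E}\{\vecr(k)^{\T}\matA\tilde{\vecx}(k)\}$, because $\vecr(k)$ is a highly nonlinear function of $\hat{\vecx}(k)$ through OMP and the closed loop feeds $\hat{\vecx}(k)$ back into the state. The clean way out is to observe that the Kalman error obeys $\tilde{\vecx}(k+1) = (\matI-\matK(k+1)\matC)\matA\tilde{\vecx}(k) + (\matI-\matK(k+1)\matC)\vecv(k) - \matK(k+1)\vecw(k+1)$, which does not involve $\vecu(k)$; thus $\tilde{\vecx}(k)$ is independent of the measurement history and hence of $\vecr(k)$, killing the cross-term. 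Alternatively, the cross-term can simply be absorbed using $2|\veca^{\T}\vecb|\le\|\veca\|^2+\|\vecb\|^2$ at the price of looser constants, which is enough to preserve the form of \eqref{eq:noise_algo_guarantee}.
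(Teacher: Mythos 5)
Your proposal is correct and follows essentially the same route as the paper's proof in Appendix C: the same error identity $\vecx(k+1)-\vecx_f=-\vecr(k)+\matA\tilde{\vecx}(k)+\vecv(k)$, the same use of the OMP decay bound and the decomposition $\vecx_f-\matA\hat{\vecx}(k)=(\matI-\matA)\vecx_f-\matA(\vecx(k)-\vecx_f)+\matA\tilde{\vecx}(k)$, the same contraction constant $2\xi^s(\matB)\|\matA\|^2<1$, and the same passage to the limit via $\matP_k\to\matP$. The only point to tighten is your justification for killing the cross-term: the paper does it via $\mathbb{E}\{\tilde{\vecx}(k)\mid\{\vecy(i)\}_{i\le k}\}=\zero$ together with the fact that $\vecr(k)$ is measurable with respect to the measurement history (your ``error recursion does not involve $\vecu(k)$, hence independence'' phrasing is a non sequitur as stated, and your fallback $2|\veca^{\T}\vecb|\le\|\veca\|^2+\|\vecb\|^2$ would not reproduce the exact constants in \eqref{eq:noise_algo_guarantee}).
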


\begin{proof}
See \Cref{app:KF_based_alg}.
\end{proof}

We note that the last term in the upper bound in \eqref{eq:noise_algo_guarantee} depends on the process and measurement noise covariance matrices $\bs{\Sigma}_v$ and $\bs{\Sigma}_w$. This term characterizes the impact of process noise and Kalman filter estimation error. If $s=n$, we have $\bbE\{\lV \vecx(k)-\vecx_f \rV^2\} = \trace{\bs{\Sigma}_v + \matA^{\T}\matP_{k-1}\matA}$ from \eqref{eq:step_main1} in \Cref{app:KF_based_alg}, and $\matP_{k-1}$ converges to $\matP$ asymptotically. Note that this expression also serves as a lower bound on the MSE at any $s$. We observe that when $\vecx_f$ is an equilibrium point of the autonomous system $\matA$ (e.g., $\vecx_f=\zero$), the upper bound in \eqref{eq:noise_algo_guarantee} only have terms related to the estimation error and process noise, and it is a scaled version of $\trace{\bs{\Sigma}_v + \matA^{\T}\matP\matA}$, the MSE when $s=n$. Hence, we can expect the performance of \Cref{alg:sparse_control_noise} to be near optimal. 

If, for some $k$, we have $\hat{\vecx}(k) = \vecx_f$ and $\vecx_f$ is not an equilibrium point, then $\vecu(k)$ must satisfy $\matB \vecu(k) = (\matI-\matA)\vecx_f$. Due to the sparsity constraint on $\vecu(k)$, it may not be possible to remain in state $\vecx_f$ by applying sparse inputs, even in the noiseless case.\footnote{In the noiseless setting, \cite{icassp_paper} presents a test to check whether $\matB\vecu(k) = (\matA-\matI)\vecx_f$ admits a sparse solution $\vecu(k)$.} The second term in \eqref{eq:noise_algo_guarantee} thus captures the algorithm's inability to stabilize the system at $\vecx_f$ by applying sparse inputs; the term decays exponentially with $s$. 
Our simulation results in \Cref{sec:simulations} corroborate these observations and show that \eqref{eq:noise_algo_guarantee} captures the behavior of the MSE incurred by \Cref{alg:sparse_control_noise} accurately. 

\emph{Remark:} \Cref{alg:sparse_control_noise} can be readily extended to the case where a regularization term on $\ell_2$ norm of the input $\vecu(k)$ or $\vecx(k)$ is explicitly imposed in \eqref{eq:kf_based_opt}. If $\matQ \succeq \zero$ and $\matR \succ \zero$ are the regularizers for the state and input vectors, respectively, then we can consider the following optimization problem, 
\begin{equation}
    \underset{\vecu(k)}{\min} \lV \matB^{\T}\matQ\lb\vecx_f-\matA\hat{\vecx}(k)\rb - \Tilde{\matB}\vecu(k) \rV \text{ s.t. } \lV\vecu(k)\rV_0 \leq s,
\end{equation}
where $\tilde{\matB} = \matB^{\T}\matQ\matB + \matR$. We note that the above reduces to \eqref{eq:kf_based_opt} when $\matQ = \matI$ and $\matR = \zero$. Also, similar to~\cite{Vafaee2023LargeSensorNwk}, sensor scheduling can be used to reduce the cost of state estimation. 

In the next section, we evaluate the performance of the algorithms developed above via simulations and compare them against our theoretical bounds. 

\section{Simulation Results}
\label{sec:simulations}

In this section, we illustrate the performance of the time-varying and sparse actuator scheduling algorithm developed in this work by applying it to an LDS whose transfer matrix $\matA$ is generated from an Erd\H{o}s-Renyi random graph. An Erd\H{o}s-Renyi random graph consists of $n$ vertices. An edge independently connects a pair of vertices with probability {$p_e = \frac{2 \ln n}{n}$}, a setting used to model real-world networked systems as well as social networks~\cite{Tzoumas_2016_CtrlEffort, Liu_2011_CtrlCmxNwk}. 
This ensures that the generated Erd\H{o}s-Renyi graph is connected with high probability as $n$ gets large. The transfer matrix of the LDS is then generated from the graph as $\matA = \matI - \frac{1}{n} \matL$ as described in~\cite{jadbabaie2018deterministic, Saber_2007_Consesus}, where $\matL$ is the graph Laplacian. Here, $\matA$ is a row stochastic matrix, so we have $\lV \matA \rV$ = 1. Also, when $p_e=0$, we have $\matA = \matI$ and when $p_e=1$, we have $\matA = \frac{1}{n}\one\one^{\T}$. We consider the entries of the input matrix $\matB$ are i.i.d. (either distributed uniformly over $[0,1]$ or $\calN(0,1)$)~\cite{icassp_paper}. In the case where $m=n$, we also consider $\matB=\matI$.


\begin{table}[t]
    \centering
    \caption{The average control energy, $\log\{\trace{\matW_\calS^{-1}}\}$, vs. sparsity level $s$. \texttt{RB$n$-greedy} is the proposed \Cref{alg:greedy_algo_epsilon}, while the other two are algorithms in~\cite{Ballotta_2024_PtwiseSch}. The existing algorithms fail to obtain a sparse controllable schedule when $s=2$.}
    \begin{tabular}{|c|c|c|c|c|}
         \hline
         Algorithm & $s=2$ & $s=3$ & $s=4$ & $s=5$ \\ \hline
         RB$n$-greedy & $10.9535$ & $6.1344$ & $3.8603$ & $2.67244$ \\ \hline
         $s$-sparse greedy+MCMC & fails & $9.99705$ & $4.27217$ & $3.34841$ \\ \hline
         $s$-sparse greedy & fails & $10.0018$ & $6.80099$ & $5.59268$ \\ \hline
    \end{tabular}
    \label{tbl:ActComparisonCE}
\end{table}

\begin{figure}[t]
    \centering
        \includegraphics[width = 0.9\linewidth]{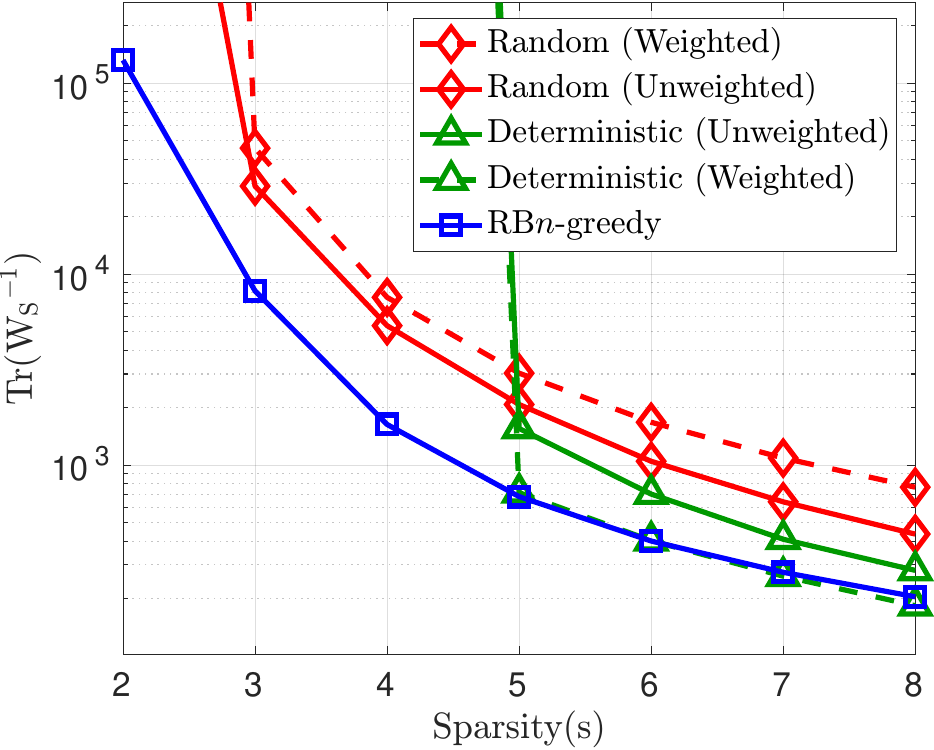}
        \caption{$\trace{\matW_\calS^{-1}}$ as a function of sparsity level $s$, averaged over $100$ independent trials with $n=100, m=n, K=50$.} 
        \label{fig:ActComparison}
\end{figure}

\subsection{Noiseless LDS}
First, we compare the RB$n$-greedy actuator scheduling algorithm, \Cref{alg:greedy_algo_epsilon}, with the schedulers presented in~\cite{Ballotta_2024_PtwiseSch}. We consider the system\footnote{Code for generating the results in this section can be found here: \url{https://github.com/KondapiPraveen/Sparse-Controllability}} 
$(\matA,\matB)$ where $n=20$ with $p_e = 0.89$ and the control horizon is $K=\lceil\frac{n}{s}\rceil$. In \Cref{tbl:ActComparisonCE}, we show the behavior of the average control energy $\trace{\matW_{\calS}^{-1}}$ as a function of the sparsity level $s$ for \Cref{alg:greedy_algo_epsilon}, and the existing $s$-sparse greedy, $s$-sparse greedy+MCMC from~\cite{Ballotta_2024_PtwiseSch}. It can be seen that for $s=2$, the  $s$-sparse greedy and $s$-sparse greedy+MCMC schedulers fail to ensure the controllability of the system. Further, \Cref{alg:greedy_algo_epsilon} outperforms $s$-sparse greedy+MCMC and $s$-sparse greedy algorithms, with the gap reducing as $s$ increases. \Cref{alg:greedy_algo_epsilon} always provides a schedule that ensures controllability as long as $\rank{\matB}=n$, due to \Cref{thm:FullB_Ctrl_Guarantee}.

Now, we compare \Cref{alg:greedy_algo_epsilon} with schedulers adapted from~\cite{jadbabaie2018deterministic}. The algorithms in \cite{jadbabaie2018deterministic} are developed under an \emph{average} sparsity constraint across the inputs, i.e., the individual inputs need not be sparse. 
Hence, we slightly modify the algorithms in \cite{jadbabaie2018deterministic} so that they conform to the sparsity constraint considered in this work. In \Cref{fig:ActComparison}, we plot  $\trace{\matW_{\calS}^{-1}}$ as a function of $s$ for four algorithms adapted from \cite{jadbabaie2018deterministic}, namely, random-weighted, random-unweighted, deterministic-unweighted, and deterministic-weighted. The random scheduler samples an actuator from the probability distribution given in \cite[Algorithm 6]{jadbabaie2018deterministic} and adds it to the schedule, provided adding it still satisfies the sparsity constraint. The deterministic scheduler picks an actuator that greedily optimizes the objective function used in \cite[Algorithm 1]{jadbabaie2018deterministic}. The weighted schedulers have an additional weight (amplification) associated with the selected actuator, while the weights are set to $1$ for the unweighted schedule (see \cite[Algorithms 2, 6, 7]{jadbabaie2018deterministic}.) For the unweighted random scheduler, the actuators are sampled without replacement. The plot shows that \Cref{alg:greedy_algo_epsilon} generally outperforms existing algorithms modified to satisfy our sparsity constraint, especially at lower sparsity levels.

\begin{figure}[t]
    \centering
        \includegraphics[width = 0.9\linewidth]{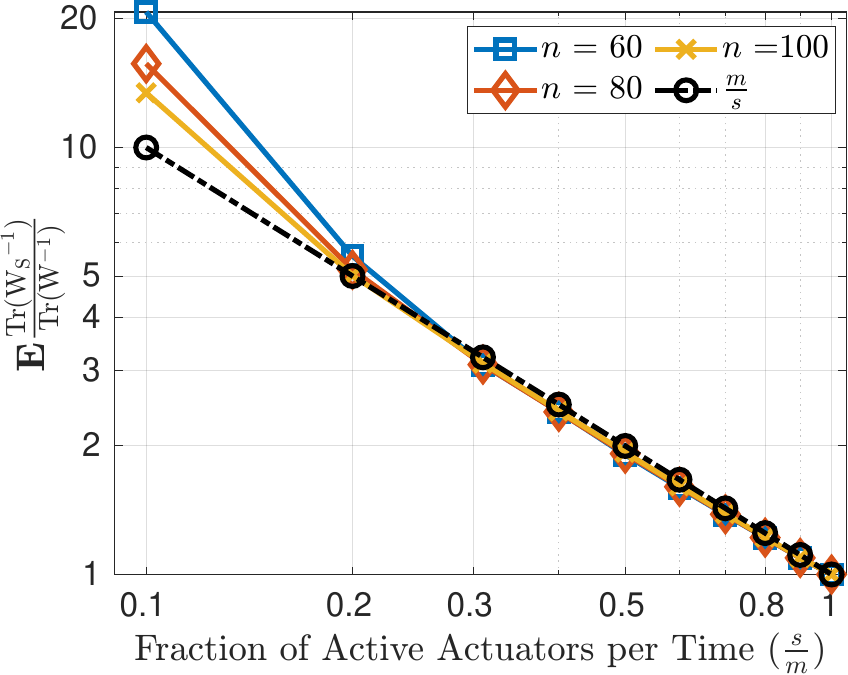}
        \caption{Relative energy cost of imposing piecewise sparsity constraints, $\bbE_{A,B} \frac{\trace{\matW_\calS^{-1}}}{\trace{\matW^{-1}}}$, as a function of the fraction of  active actuators ($\frac{s}{m}$). The relative cost is proportional to $\frac{m}{s}$ as $s$ increases. Here, $n = m$ and the entries of $\matB$ are chosen i.i.d. and uniformly at random over $[0,1]$.} 
        \label{fig:InverseEnergy}
\end{figure}

Next, we illustrate the cost of imposing piecewise sparsity constraints relative to the non-sparse case, by plotting $\rho \triangleq {\bbE_{\matA, \matB} \{ \trace{\matW_\calS^{-1}}}/\trace{\matW^{-1}} \}$ as a function of the fraction of active actuators at each time instant (${s}/{m}$); note that $\rho$ is the ratio of the average control energy with the sparsity constraint to the average control energy in the unconstrained case. We show the behavior in \Cref{fig:InverseEnergy}, with $m=n$, and $n$ varying from $60$ to $100$. We observe an inverse-linear relationship between $s/m$ and $\rho$, meaning that using $s$-sparse control inputs increases the average control energy by a factor proportional to the reciprocal of the fraction of active actuators. We have also observed this trend when $\matB$ is a random matrix with i.i.d. entries drawn from a Gaussian distribution as well as when $\matB$ = $\matI$ (for $n=m$). It is easy to theoretically establish this result for $\matA = \matB = \matI$, since $\matW_\calS = s\matI$ and $\matW=m\matI$ in this case. As $n$ increases, $p_e$ tends to 0 and $\matA$ approaches  $\matI$. Hence, the above result is expected. 

\begin{figure}
    \centering
    \includegraphics[width=0.9\linewidth]{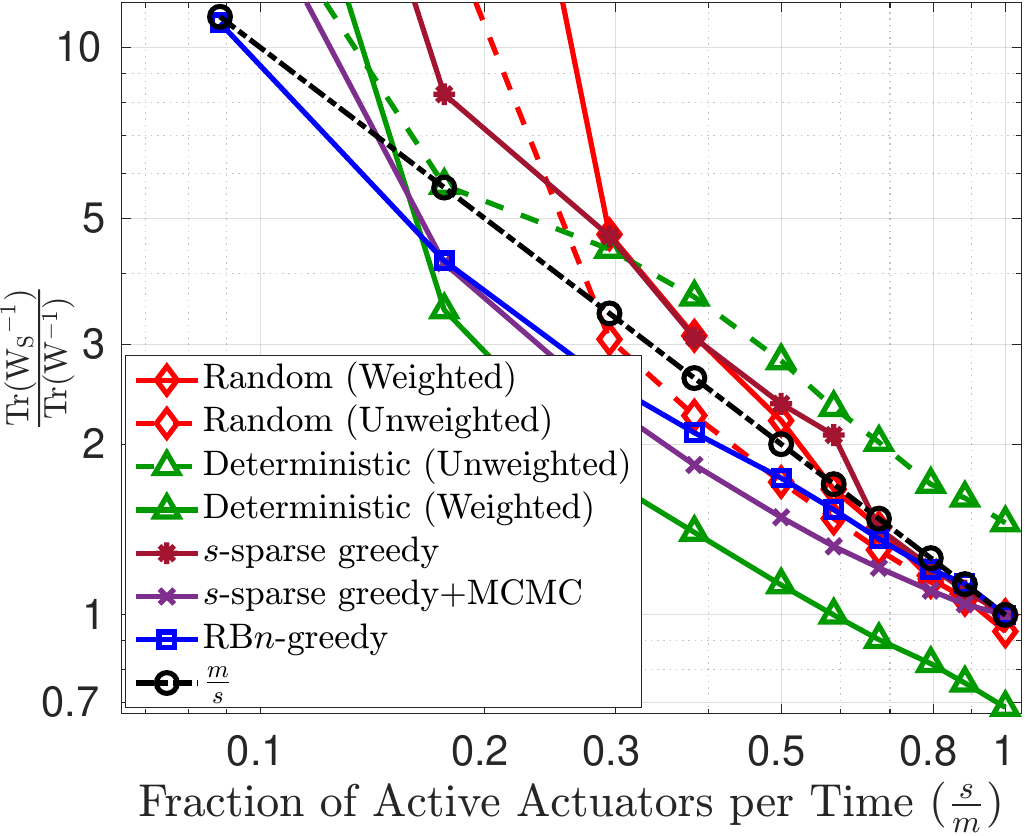}
    \caption{RB$n$-greedy has the smallest relative cost of imposing sparsity constraints than the other schedulers, when $s=3$. Its performance is comparable to $s$-sparse greedy+MCMC. Even on the real-world Zachary's karate social network dataset, the RB$n$-greedy algorithm's cost follows the $m/s$ trend observed in Fig.~\ref{fig:InverseEnergy}.}
    \label{fig:ZacharyInverseEnergy}
\end{figure}

As the last experiment in this subsection, we illustrate the relative average energy to control a real-world social network using sparse inputs. Zachary's karate club~\cite{zachary1977information} is a social network of a karate club comprising $34$ members ($n = 34$), and their pairwise interactions are captured through an adjacency matrix. The transfer matrix $\matA$ is constructed using the Laplacian matrix of this social network, and $\matB = \matI$ corresponding to the direct influence of states of $s$ members at each time step. We plot relative average energy $\rho$ as a function of ${s}/{m}$ for all the actuator schedulers in \Cref{fig:ZacharyInverseEnergy}. From the plot, we can infer that $\rho$ for \Cref{alg:greedy_algo_epsilon} is at most ${m}/{s}$. The deterministic (weighted) and $s$-sparse greedy+MCMC have smaller $\rho$ after $s=6$ because of additional weighting flexibility for actuators and exploration by Monte Carlo experiments, respectively. The performance of most of the actuator schedulers is similar in the high sparsity regime.

\begin{figure}
    \centering
        \includegraphics[width = 0.9\linewidth]{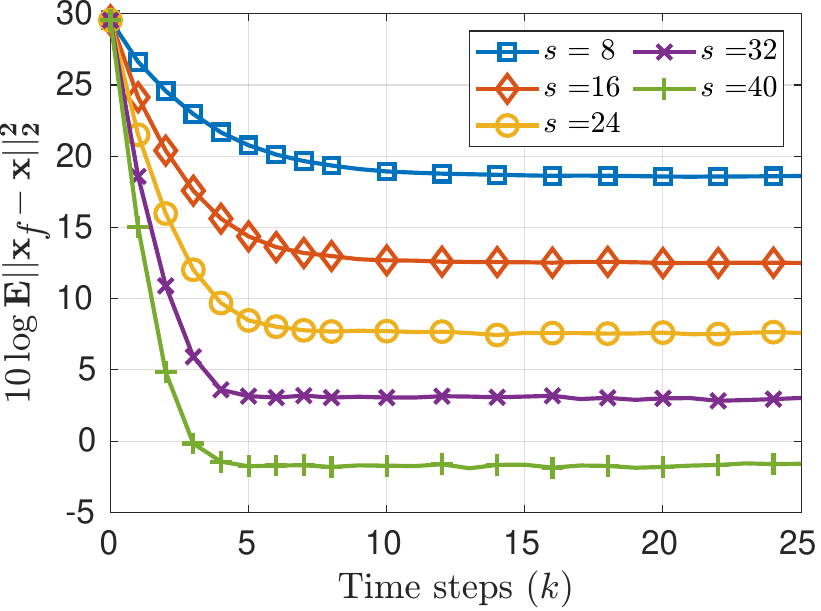}
        \caption{MSE (in dB) as a function of time steps. MSE saturates earlier for larger $s$. For $s \ge 24$, the reduction is MSE is nearly the same for successive $s$.} 
        \label{fig:MSEOverTime1}
\end{figure}

\begin{figure}
    \centering
        \includegraphics[width = 0.9\linewidth]{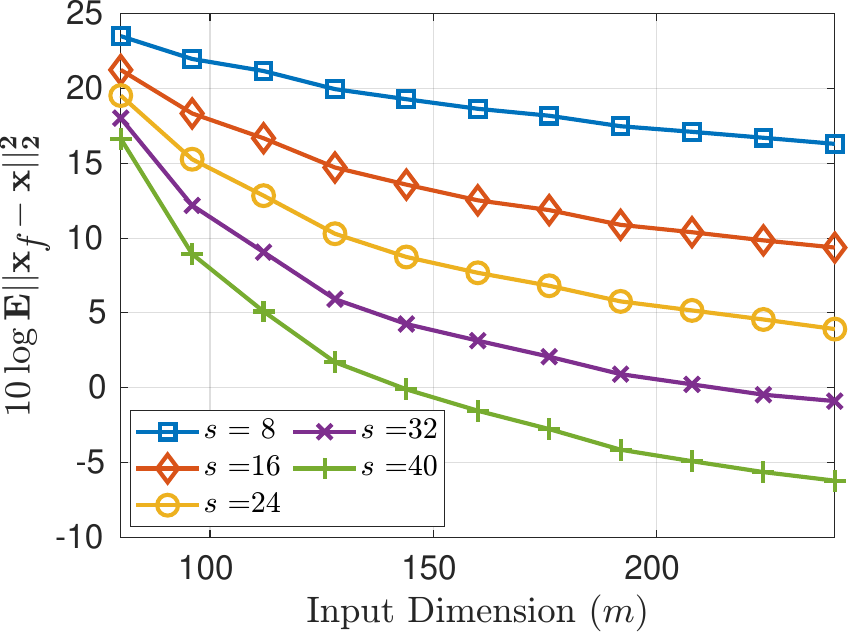}
        \caption{MSE (in dB) as a function of number of actuators ($m$). The gap between the MSE attained at successive sparsity levels increases with~$m$.} 
        \label{fig:MSEOverM}
\end{figure}

\subsection{Noisy LDS}
Now, we demonstrate the performance of \Cref{alg:sparse_control_noise} developed in \Cref{sec:Nsy_Sys} to \emph{track} a target state $\vecx_f$ for the noisy LDS in \eqref{eq:state_noise}, \eqref{eq:obsvn_noise}. The noise covariance matrices are taken as $\bs{\Sigma}_{v} = \bs{\Sigma}_{w} = \sigma^2 \matI$ and the measurement dimension $p=n$. First, we illustrate the MSE over time. In \Cref{fig:MSEOverTime1}, we plot the logarithm of the MSE, computed as $10\log \bbE \lc \lV \vecx_f - \vecx(k) \rV^2 \rc$, as a function of time $k$ with sparsity levels $s$ varying from $8$ to $40$. We consider an LDS with $n=80, m=2n$, the noise variance is taken as $\sigma^2=10^{-4}$ and $\vecx(0) = \zero$. We observe that the MSE saturates earlier for larger $s$, which can be inferred from \eqref{eq:bound_1} in \Cref{app:KF_based_alg}. Here, the saturation level is dominated by the second term in \eqref{eq:noise_algo_guarantee}. Due to the denominator of the second term, initially, we see a larger improvement from successive sparsity levels. Beyond $s=24$, we see that the improvement between successive sparsity levels is roughly constant. This is because the denominator approaches $1$, leading to a smaller effect of increasing $s$ on the change in the saturation level.

Choosing a larger $m$ will improve the MSE for the same sparsity level because $\xi(\matB)$ generally decreases with $m$. In \Cref{fig:MSEOverM}, we illustrate the effect of increasing $m$ on the MSE under the same settings as in \Cref{fig:MSEOverTime1}, but with the total number of actuators ($m$) varied from $80$ to $240$. We observe that having a larger $m$ leads to better tracking for the same sparsity level $s$. Also, as $m$ increases, the reduction in MSE between successive values of $s$ increases. For example, we can achieve an MSE of $10$~dB (which corresponds to reaching within a ball centered at $\vecx_f$ and with radius less than $(0.1 \times \lV \vecx_f \rV)$) in an LDS with $m=95$ inputs if $s=40$, but if $m=110$ or $210$, $s=32$ or $s=16$ actuators suffice, respectively. Increasing $m$ provides the scheduler with more choices for selecting actuators, resulting in a smaller sparsity level being sufficient.

\begin{figure}
    \centering
        \includegraphics[width = 0.9\linewidth]{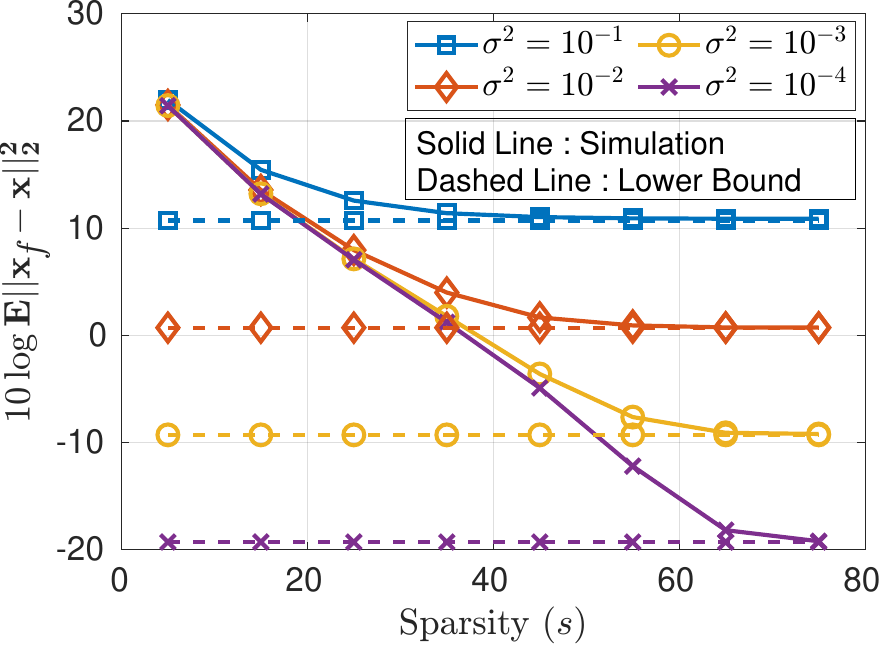}
        \caption{MSE (in dB) as a function of sparsity, $n = 80$ and $m = 160$. The MSE initially decreases linearly in log scale and then saturates.} 
        \label{fig:MSEOverSparsity}
\end{figure}

In \Cref{fig:MSEOverSparsity}, we illustrate the effect of the process and measurement noise variance on the MSE. We plot the \textcolor{black}{MSE (in dB)} at the end of $K=40$ control time steps as a function of sparsity level $s$ for different noise variance values. We consider an LDS with $n=80, m=2n$, and vary the noise variance $\sigma^2$ from $10^{-1}$ to $10^{-4}$. Initially, the MSE decreases exponentially with $s$, and then saturates. The initial exponential decay of the MSE with $s$ is because of the second term in \eqref{eq:noise_algo_guarantee}, which decreases exponentially with $s$. As $s$ increases, the error due to the process noise and the state estimation error (the last term in \eqref{eq:noise_algo_guarantee}) dominates. So, increasing $s$ beyond a certain level does not reduce the MSE significantly. This illustrates that the initial exponential decay and eventual saturation behavior of the MSE are both captured by the upper bound in \Cref{thm:noise_algo_guarantee}. The figure also shows the lower bound on the MSE, namely, $\trace{\bs{\Sigma}_v + \matA^{\T}\matP_K \matA}$, with $\matP_K$ obtained from \Cref{line:cov_update}, which shows that the MSE approaches the lower bound as $s$ increases. 

\begin{figure}
    \centering
        \includegraphics[width = 0.9\linewidth]{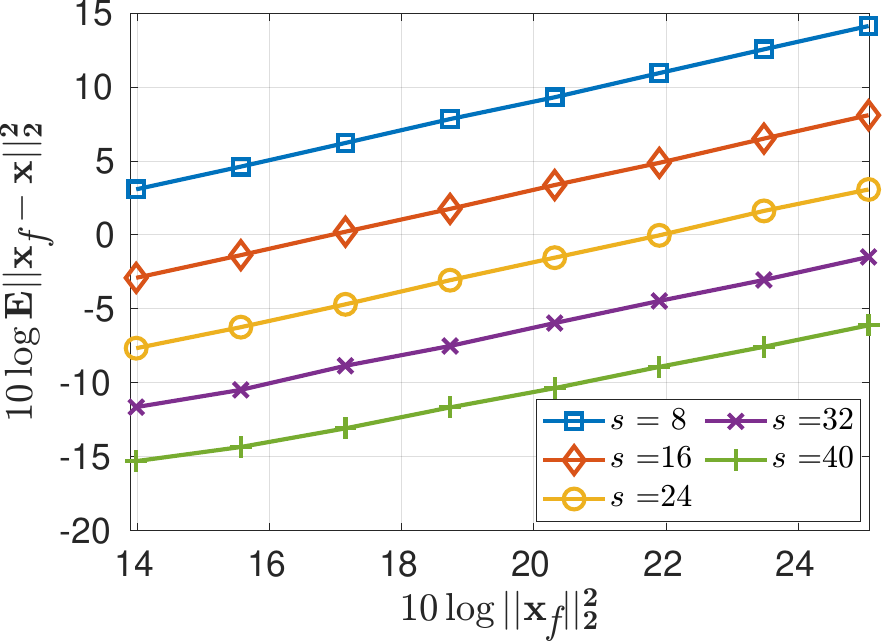}
        \caption{MSE (in dB) as a function of $\log \lV \vecx_f \rV^2$. The dependence is linear and the gap between the curves is roughly constant as $s$ increases, both of which are as expected from the second term in \eqref{eq:noise_algo_guarantee}.} 
        \label{fig:MSEOverNormX_f}
\end{figure}

Next, we illustrate the effect of $\lV \vecx_f \rV$ on the \textcolor{black}{MSE}. In \Cref{fig:MSEOverNormX_f}, we plot the MSE at the end of $K=40$ control time steps as a function of $\lV \vecx_f \rV^2$ (in the $\log$-domain), for $s$ varying from $8$ to $40$. \textcolor{black}{We consider an LDS with $n=80, m=2n$, and the noise variance is $\sigma^2 = 10^{-4}$. As $\lV \vecx_f \rV$ increases, tracking the target state $\vecx_f$ becomes harder because the second term in \eqref{eq:noise_algo_guarantee}  increases with $\lV \vecx_f \rV$. This plot also shows that MSE (in dB) behaves roughly as $-(\text{constant})\cdot s + \log{\norm{\vecx_f}{2}^2}$ as predicted by the second term in \eqref{eq:noise_algo_guarantee}, since the third term is negligible.} 

\begin{figure}
    \centering
        \includegraphics[width = 0.9\linewidth]{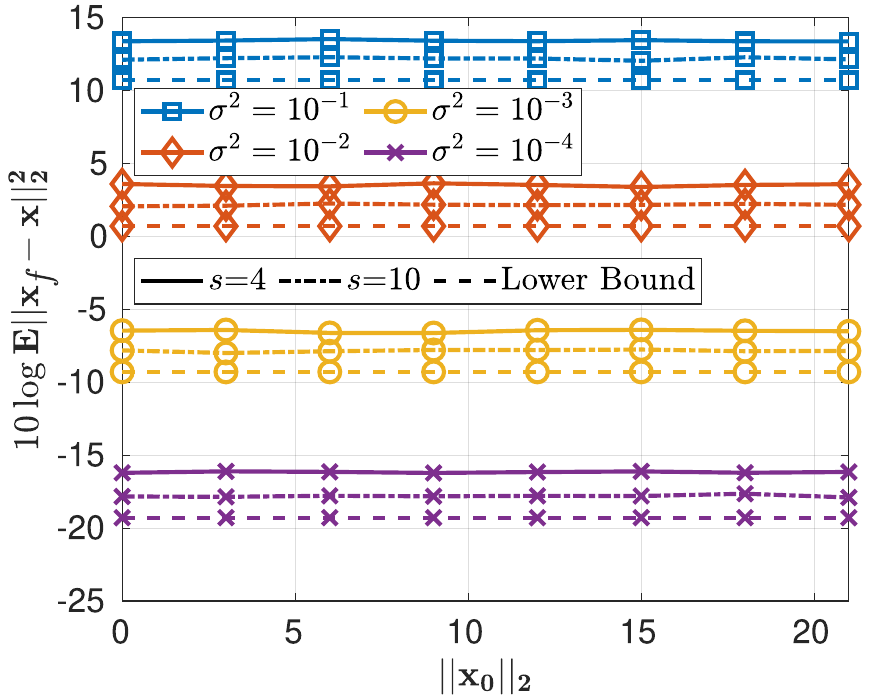}
        \caption{MSE (in dB) as a function of $\lV \vecx_0 \rV$. The MSE is close to the lower bound even for $s=4$.} 
        \label{fig:MSEOverNormX_0}
\end{figure}

Finally, we consider the case when $\vecx_f = \one$ and $\vecx(0) = \vecx_0$ instead of $\zero$. We note that $\vecx_f = \one$ corresponds to consensus among the nodes in the network. In \Cref{fig:MSEOverNormX_0}, we plot the MSE (in dB) at the end of $K=40$ control time steps as a function of $\lV \vecx_0 \rV$ for $\sigma^2$ ranging from $10^{-1}$ to $10^{-4}$, and $s=4$ and $10$. We consider an LDS with $n=80, m=2n$. Since $\matA$ is a row stochastic matrix, $\lV \lb\matA-\matI\rb\vecx_f \rV=0$, the second term in \eqref{eq:noise_algo_guarantee} vanishes. The last term corresponding to noise variance and estimation error, which is independent of $\lV \vecx_0 \rV$, dominates the tracking performance of \Cref{alg:sparse_control_noise}. This pattern is evident in the plot as the noise variance increases: the MSE also increases irrespective of the initial state $\vecx(0)$. We have also observed the same tracking performance for other sparsity levels; we omit the plots to avoid repetition.

\section{Conclusion}
We studied the problem of finding an energy-efficient actuator schedule for controlling an LDS in a finite number of steps, with at most $s$ active inputs per time step. To address this problem, we presented a greedy scheduling algorithm for minimizing the average control energy, $\trace{\matW_\calS^{-1}}$. We presented several interesting theoretical guarantees for the algorithm when the input matrix has full row rank. Simulations showed that the algorithm returns a feasible schedule, with the average energy increasing modestly (compared to the case without sparsity constraints), proportional to the inverse of the fraction of active actuators. Thus, sparse inputs can drive an LDS to a desired state without significantly increasing energy requirements at moderate sparsity levels. We also developed a controller based on the Kalman filter for noisy LDSs and derived an upper bound on its \textcolor{black}{steady-state MSE}. Simulations indicated that increasing the number of available actuators $m$ improves the tracking performance when the sparsity level is fixed. Finally, we showed that, even under sparsity constraints, we can stabilize the system around an equilibrium point in the presence of noise. Future work could look at assessing the sparse controllability under adversaries and under unknown system matrices, especially since data-driven control has been gaining focus recently. 

\crefalias{section}{appendix}
\begin{appendices}
    

\section{Proof of \Cref{thm:FullB_Ctrl_Guarantee}} \label{app:FullB_Ctrl_Guarantee}
    By construction in \Cref{alg:LI_Schedule_FullB}, the elements of set $\calT^{(i)}$ corresponds to linearly independent columns in matrix $\begin{bmatrix}
    \matA^{K-1}\matB & \matA^{K-2}\matB & \ldots & \matA^{i}\matB
\end{bmatrix}$. To ensure controllability, i.e., $\rank{\matR_{\calG_0}}=n$, we need to prove that $\lv \calT^{(0)} \rv = n$ and it is enough to prove the required result for the case of $K = \lceil \frac{n}{s}\rceil$. To this end, it suffices to show that $\lv \calT^{(1)} \rv \geq n-s$ since $\rank{\matB}=n$. The remaining $s$ elements can then be added to $\calT^{(0)}$, corresponding to linearly independent columns from $\matB$, thereby making $\lv \calT^{(0)} \rv = n$. Furthermore, in \Cref{alg:LI_Schedule_FullB}, the number of linearly independent columns selected until any iteration index $i$~is (See \Cref{prop:MaxLIColumns})
    \begin{equation}
        \lv \calT^{(i)} \rv = \min \lc \rank{\matA^{i}\matB}, \lv \calT^{(i+1)} \rv + s \rc.
    \end{equation}
    This is because we only add linearly independent columns in each iteration and $\lv \calI(i) \rv = l(i)$.

    Rewriting $\lv\calT^{(i)}\rv$ using above recursive relation, we get 
    \begin{multline}
        \lv\calT^{(i)}\rv = \min \{ \rank{\matA^i\matB}, \rank{\matA^{i+1}\matB}+s \\ ,\ldots, \rank{\matA^{K-1}\matB}+(K-i-1)s, (K-i)s \}.
    \end{multline}
    Therefore, for $i=1$, we have 
    \begin{multline} \label{eq:T1Cardinality}
        \lv\calT^{(1)}\rv = \min \{ \rank{\matA\matB}, \rank{\matA^{2}\matB}+s ,\ldots, \\ \rank{\matA^{j}\matB}+(j-1)s,\ldots, \\ \rank{\matA^{K-1}\matB}+(K-2)s, (K-1)s \}.
    \end{multline}
    Now, to establish that $ \lv\calT^{(1)}\rv\geq n-s$, it is enough to show that each term inside the minimum operator in \eqref{eq:T1Cardinality} is at least $n-s$. Starting with the last term $(K-1)s$, we have
    \begin{equation} \label{eq:MinInputGuarantee}
        \lb K-1 \rb s = \lb \left \lceil\frac{n}{s}\right\rceil - 1\rb s \geq n-s. 
    \end{equation}
    Now, we show that the remaining terms in \eqref{eq:T1Cardinality} are at least $n-s$. We use Sylvester's rank inequality
   to arrive at
    \begin{equation}
        \rank{\matA^i\matB} \geq \rank{\matA^{i-1}\matB} + \rank{\matA} - n.
    \end{equation}
    Since $s \geq \rank{\matA} - n$, we have
    \begin{equation} \label{eq:AkBinequality}
        \rank{\matA^i\matB} \geq \rank{\matA^{i-1}\matB} - s.
    \end{equation}
    Applying \eqref{eq:AkBinequality} recursively $i-2$ times, we derive
    \begin{equation}
        \rank{\matA^i\matB} \geq \rank{\matA\matB} - (i-1)s.
    \end{equation}
    Finally, since $\rank{\matA\matB} = \rank{\matA}$ and $\rank{\matA} \geq n-s$, we deduce that
    \begin{equation} \label{eq:AkB_Lower_Bnd}
        \rank{\matA^i\matB} + (i-1)s \geq n-s.
    \end{equation}
    
    Combining \eqref{eq:AkB_Lower_Bnd}, \eqref{eq:MinInputGuarantee} and \eqref{eq:T1Cardinality}, we get $\lv \calT^{(1)} \rv \geq n-s$. Hence, $\lv\calT^{(0)}\rv=n$, or  $\rank{\matR_{\calG_0}}=n$, completing the proof. \hfill \qedb

\section{Proof of \Cref{prop:submodular_proof}}\label{app:submodular_proof}
The proof follows along the lines of \cite[Proposition 2]{Kondapi_icc_2024}, except that we define $\matM_{\emptyset} = \matW_{\calG_0} \succ 0$ instead of $\matM_{\emptyset} = \epsilon \mathbf{I}$ and we modify the sparsity constraint in \cite[Eq. (25)]{Kondapi_icc_2024} to $s - |\{j: (k,j) \in \mathcal{T}_0\}|$ instead of $s$ to account for the initialization $\mathcal{T}_0$ used here.
\hfill\qedb

\section{Proof of \Cref{thm:noise_algo_guarantee}}
\label{app:KF_based_alg}
We define $\vecd(k) = \vecx(k) - \vecx_f$. Our goal is to obtain an upper bound on $\bbE \{ \lVert \vecd (k) \rVert^2 \}$. First, from \eqref{eq:state_noise}, we have 
\begin{equation}
 \vecd(k) 
     \!=\! \matA \hat{\vecx}(k-1) \!+\! \matB \vecu(k-1) \!-\! \vecx_f \!+\! \matA\vecz(k-1) \!+\! \vecv(k-1),
\end{equation}
where $\hat{\vecx}(k)$ is the Kalman filter based estimate as defined in \Cref{line:kalman_est} of \Cref{alg:sparse_control_noise} and the state error $\vecz(k) \triangleq \vecx(k) - \hat{\vecx}(k)$. Since the noise $\vecv(k-1)$ has zero mean and is independent of all the other terms, taking the expectation of $\lV \vecd(k) \rV^2$ over $\{\vecv(i), \vecw(i)\}_{i=0}^{k-1}$, we derive
\begin{multline}\label{eq:step1}
 \mathbb{E} \lc \Vert \vecd(k) \Vert^2\rc = \mathbb{E} \lc\Vert \matA \hat{\vecx}(k-1) + \matB \vecu(k-1) - \vecx_f \Vert^2 \rc \\
 + 2\mathbb{E} \lc\lb \matA \hat{\vecx}(k-1) + \matB \vecu(k-1)\rb^{\T}  \matA \vecz(k-1)\rc \\- 
2\mathbb{E} \lc \vecx_f^{\T} \matA \vecz(k-1)\rc + \mathbb{E} \lc\Vert \matA \vecz(k-1) \Vert^2 \rc + \trace{\bs{\Sigma}_v}.
\end{multline}
We now analyze each of the terms on the right-hand side. We observe that, as the process and measurement noises are Gaussian, $\lc\vecx(i),\vecy(i)\rc_{i=1}^{k}$ is jointly Gaussian. Hence, the Kalman filter yields MMSE estimates. So, we deduce 
\begin{align} \label{eq:zero_bias}
    \bbE \lc\vecz(k-1)\rc = \zero, \quad \bbE \lc \vecz(k-1)|\lc\vecy(i)\rc_{i=1}^{k-1} \rc = \zero.
\end{align}
This observation allows us to deal with all the terms in \eqref{eq:step1} that are linear in $\vecz(k-1)$. Since $\vecx_f$ and $\matA$ are deterministic, from \eqref{eq:zero_bias}, we have
\begin{equation}\label{eq:step1_sub1}
   \mathbb{E} \lc \vecx_f^{\T} \matA \vecz(k-1)\rc=\vecx_f^{\T}\matA\mathbb{E} \lc \vecz(k-1)\rc=\zero.
\end{equation}
Further, we note that $\matA \hat{\vecx}(k-1) + \matB \vecu(k-1)$ is a function of the past measurements $\vecy(i)$, $i=1,2,\ldots,k-1$, and again from \eqref{eq:zero_bias}, the following term equals $\zero$,
\begin{multline*}
    \mathbb{E} \lc\lb\matA \hat{\vecx}(k-1) + \matB \vecu(k-1) \rb^{\T} \matA \vecz(k-1)\rc = \\
    \mathbb{E} \!\lc\!\lb\matA \hat{\vecx}(k-1) \!+\! \matB \vecu(k-1) \rb^{\T}\!\matA\mathbb{E}\! \lc \vecz(k-1)|\!\lc\vecy(i)\rc_{i=1}^{k-1}\!\rc\!\rc\!.
\end{multline*}
Combining \eqref{eq:step1_sub1} and the above, we simplify \eqref{eq:step1} as
\begin{multline}
 \mathbb{E} \lc \Vert \vecd(k) \Vert^2\rc = \mathbb{E} \lc\Vert \matA \hat{\vecx}(k-1) + \matB \vecu(k-1) - \vecx_f \Vert^2 \rc \\
 + \mathbb{E} \lc\Vert \matA \vecz(k-1) \Vert^2 \rc + \trace{\bs{\Sigma}_v}.\label{eq:step_main1}
\end{multline}
Next, we simplify the first term of the above equation using the exponentially decaying property of the residual norm of the OMP algorithm in \eqref{eq:omp_guarantee} as follows:
\begin{align} 
    &\mathbb{E} \lc\Vert \matA \hat{\vecx}(k-1) + \matB \vecu(k-1) - \vecx_f \Vert^2 \rc  \notag \\
    &\leq  \xi^s(\matB) \mathbb{E} \lc \Vert \matA \hat{\vecx}(k-1) - \vecx_f \Vert^2 \rc\\
    &= \xi^s(\matB)\mathbb{E} \lc \Vert \matA \vecx(k-1) \!-\! \vecx_f \Vert^2 \rc\! +\! \xi^s(\matB)\mathbb{E}\lc \Vert \matA \vecz(k\!-\!1) \Vert^2 \!\rc\notag \notag \\
    &\quad+ 2\xi^s(\matB)\mathbb{E}\lc \vecx(k-1)^{\T} \matA^{\T} \matA \vecz(k-1) \rc,\label{eq:step3}
\end{align}
which follows from \eqref{eq:step1_sub1}. Using $\vecx(k-1)=\vecd(k-1)+\vecx_f$,
\begin{multline}
    \mathbb{E}\! \lc \Vert \matA \vecx(k-1) - \vecx_f \Vert^2 \rc \!=\! \mathbb{E} \!\lc \Vert \matA \vecd(k-1) +\matA\vecx_f - \vecx_f \Vert^2\rc \\
    \leq 2\Vert \matA \Vert^2\mathbb{E} \lc \Vert\vecd(k-1)\Vert^2\rc +2\Vert\matA\vecx_f - \vecx_f \Vert^2,\label{eq:step3_sub1}
\end{multline}
due to the Cauchy-Schwarz inequality. Similarly, using $\vecx(k-1)= \hat{\vecx}(k-1)+\vecz(k-1)$, we rewrite the last term of \eqref{eq:step3} as
\begin{align}
\mathbb{E}\lc \vecx(k-1)^{\T} \matA^{\T} \matA \vecz(k-1) \rc 
=  \mathbb{E}\lc \Vert \matA \vecz(k-1) \Vert^2 \rc.\label{eq:step3_sub2}
\end{align}
where the last step is due to \eqref{eq:zero_bias}. Combining \eqref{eq:step_main1}, \eqref{eq:step3}, \eqref{eq:step3_sub1}, and \eqref{eq:step3_sub2}, and defining $\chi = 2 \xi^{s}(\matB) \Vert \matA \Vert^2$, we arrive at
\begin{align}
 \mathbb{E} \lc \Vert \vecd(k) \Vert^2\rc \notag\\
 &\hspace{-1.8cm}\leq \chi\mathbb{E} \lc \Vert\vecd(k-1)\Vert^2\rc
 +(1+ 3\xi^s(\matB))\mathbb{E}\lc \Vert \matA \vecz(k-1) \Vert^2\rc\notag\\
 &+ 2\xi^s(\matB)\Vert\matA\vecx_f - \vecx_f \Vert^2+\trace{\bs{\Sigma}_v},\\
 & \hspace{-1.8cm}= \chi^k  \Vert \vecd(0) \Vert^2+ \lb1 + 3\xi^s(\matB) \rb\sum_{i=0}^{k-1}  \chi^{k-i-1}  \mathbb{E} \Vert \matA \vecz(i) \Vert^2\notag\\
 & \hspace{-1.5cm} +\sum_{i=1}^k \chi^{i-1} \ls 2 \xi^s(\matB)\Vert \matA \vecx_f - \vecx_f \Vert^2+ \trace{\bs{\Sigma}_v}  \rs.
\label{eq:recus_sum_expansion}
\end{align}
We note that $\chi < 1$ since $s \geq \max\{ -\frac{\log(2\Vert \matA \Vert^2)}{\log(\xi(\matB))}, 0 \}$ by assumption. So, the first term of \eqref{eq:recus_sum_expansion} is a decreasing function of $k$, which is lower bounded by 0. Hence, we can always find an integer $k_1$ such that
\begin{equation} \label{eq:bound_1}
\chi^{k} \Vert \vecd(0) \Vert^2  \leq \frac{\eta}{3},
\end{equation}
for all $k \geq k_1$. 
Also, for the second term of \eqref{eq:recus_sum_expansion}, we note that 
\begin{equation}\label{eq:az_norm}
    \mathbb{E} \Vert \matA \vecz(i) \Vert^2 = \trace{\matA \matP_i \matA^{\T} } .
\end{equation}
Substituting \eqref{eq:bound_1} and \eqref{eq:az_norm} into \eqref{eq:recus_sum_expansion} gives
\begin{multline}\label{eq:bound_last}
    \hspace{-5mm}\mathbb{E} \lc \Vert \vecd(k) \Vert^2\rc <  \frac{\eta}{3}+ \lb1 + 3\xi^s(\matB) \rb\sum_{i=0}^{k-1}  \chi^{k-i-1}  \trace{\matA \matP_i \matA^{\T} }\\
 +\frac{ 2 \xi^s(\matB)\Vert \matA \vecx_f - \vecx_f \Vert^2+ \trace{\bs{\Sigma}_v} }{1-2 \xi^{s}(\matB) \Vert \matA \Vert^2},
\end{multline}
where we use the property that, for any integers $0\leq i_i\leq i_f$, 
\begin{equation}\label{eq:gp_prop}
    \sum_{i=i_i}^{i_f} \chi^{i} <\sum_{i=0}^{\infty} \chi^{i} = \frac{1}{1-\chi},
\end{equation}
because $0<\chi<1$. Now, to establish the desired bound in \eqref{eq:noise_algo_guarantee}, it remains analyze the second term in~\eqref{eq:bound_last}. To this end, we use \eqref{eq:KF_cnvg_1} and \eqref{eq:KF_cnvg_2}  to obtain $\lim_{i\to\infty}\matP_i=\matP$. The limit $\matP$ exists when the system $(\matA,\bs{\Sigma}_v^{1/2},\matC)$ is controllable and observable. Hence, we can find an integer $k_2$ satisfying 
 \begin{equation}\label{eq:bound_2}
     \trace{\matA \matP_i \matA^{\T} } < \trace{\matA \matP \matA^{\T} }+ \frac{\eta(1-\chi)}{3(1+\xi^s(\matB)) },
 \end{equation}
 for all  $i \geq k_{2}$. Therefore, for $k>k_2$, the second term of \eqref{eq:bound_last} simplifies as follows:
\begin{align}
\sum_{i=0}^{k-1} \chi^{k-i-1} \trace{\matA \matP_i \matA^{\T}}  \notag\\
&\hspace{-3.8cm}\leq\!\sum_{i=0}^{k_2-1}   \chi^{k-i-1}  \trace{\!\matA \matP_i \matA^{\T}\!} \!+ \max_{i\geq k_2} \trace{\!\matA \matP_i \matA^{\T}\!} \!\sum_{i = k_2}^{k-1} \!\chi^{k-i-1} \nonumber \\
&\hspace{-3.8cm}\leq \!\sum_{i=0}^{k_2-1}   \chi^{k-i-1}  \trace{\!\matA \matP_i \matA^{\T}\!} \!+\frac{\trace{\!\matA \matP \matA^{\T}\!}}{1- \chi} + \frac{\eta}{3(1+\xi^s(\matB)) },\label{eq:step4_sub3}
\end{align}
due to \eqref{eq:gp_prop} and \eqref{eq:bound_2}. Next, we use $\chi < 1$ to choose an integer $k_3>k_2$ such that
\begin{equation*}
    \chi^{k_3-k_2}\sum_{i=0}^{k_2-1}  \chi^{k_2-i-1}  
\trace{\matA \matP_i \matA^{\T}} < \frac{\eta }{3\lb1+3\xi^s(\matB)\rb}. 
\end{equation*}
Hence, for all $k>k_3$, we bound the first term of \eqref{eq:step4_sub3} as
\begin{multline}
\sum_{i=0}^{k_2-1}  \chi^{k-i-1}  \trace{\matA \matP_i \matA^{\T}}\leq \sum_{i=0}^{k_2-1}   \chi^{k_3-i-1}  \trace{\matA \matP_i \matA^{\T}}  \\
< \frac{\eta }{3\lb1+3\xi^s(\matB)\rb}.\label{eq:step4_sub4}
\end{multline}
Now \eqref{eq:step4_sub3} and \eqref{eq:step4_sub4} jointly lead to
\begin{multline}\label{eq:Az_norm_bound}
    \lb1 + 3\xi^s(\matB) \rb\sum_{i=0}^{k-1}  \chi^{k-i-1}  \trace{\matA \matP_i \matA^{\T}}\\
    <\frac{1 + 3\xi^s(\matB)}{1-  2\xi^s(\matB) \Vert \matA \Vert^2 }\trace{\matA \matP \matA^{\T}} +\frac{2\eta }{3}.
\end{multline}
Combining \eqref{eq:bound_last} and \eqref{eq:Az_norm_bound}, we arrive at \eqref{eq:noise_algo_guarantee}.  Hence, the proof is complete.
 \hfill \qedb





\end{appendices}

\section*{References}
\bibliographystyle{Supporting_Files/IEEEtran.bst}
\bibliography{Supporting_Files/IEEEabrv.bib,refs}    

\begin{IEEEbiographynophoto}{Krishna Praveen V. S. Kondapi}
    (GS'24) is pursuing Ph.D. degree at Indian Institute of Science (IISc), Bengaluru, India. His research interests are sparse signal processing and networked control systems.
\end{IEEEbiographynophoto}
\begin{IEEEbiographynophoto}{Chandrasekhar Sriram}
    received the M.Tech. degree from IISc. He is currently working at Texas Instruments India Pvt. Ltd. His research interests are in the areas of energy-efficient signal processing, networked control systems, and compressed sensing.
\end{IEEEbiographynophoto}
\begin{IEEEbiographynophoto}{Geethu Joseph}
    (M'20) received the B.Tech. degree in ECE from the National Institute of Technology Calicut, India in 2011. She received the M.Eng. degree in signal processing and the Ph.D. degree in ECE from IISc. in 2014, 2019, respectively. She was a Postdoctoral Fellow with the Department of Electrical Engineering and Computer Science, Syracuse University, Syracuse, NY, USA, from 2019 to 2021. She is currently a tenured Assistant Professor with the Signal Processing Systems Group, Delft University of Technology, Delft, The Netherlands. Her research interests include statistical signal processing, network control, and machine learning.
\end{IEEEbiographynophoto}
\begin{IEEEbiographynophoto} {Chandra R. Murhty}
    (F'23) is a Professor at the Department of Electrical Communication Engineering, IISc, India. His research interests are in the areas of energy harvesting communications, multiuser MIMO systems, and sparse signal recovery techniques applied to wireless communications.
\end{IEEEbiographynophoto}
\end{document}